\providecommand{\tabularnewline}{\\}
 \theoremstyle{definition}
 \newtheorem*{defn*}{\protect\definitionname}
\theoremstyle{plain}
\newtheorem{thm}{\protect\theoremname}
  \theoremstyle{plain}
  \newtheorem*{thm*}{\protect\theoremname}
  \theoremstyle{plain}
  \newtheorem{lem}[thm]{\protect\lemmaname}
\titleformat*{\paragraph}{\bf\small}
\renewcommand{\Re}{\operatorname{Re}}
\renewcommand{\Im}{\operatorname{Im}}
   \renewcommand{\ref}[1]{(\origref{#1})}}
  \providecommand{\definitionname}{Definition}
  \providecommand{\lemmaname}{Lemma}
  \providecommand{\theoremname}{Theorem}
\providecommand{\theoremname}{Theorem}
\begin{document}

\title{Continuous point symmetries in Group Field Theories}

\author{Alexander Kegeles}
\email{kegeles@aei.mpg.de}

\affiliation{Max Planck Institute for Gravitational Physics (Albert Einstein Institute),
~\\
Am Mühlenberg 1, 14476 Potsdam-Golm, Germany, EU}

\affiliation{Institute of Physics and Astronomy, University of Potsdam, ~\\
Karl-Liebknecht-Str. 24/25, 14476 Potsdam, Germany }

\author{Daniele Oriti}
\email{oriti@aei.mpg.de}

\affiliation{Max Planck Institute for Gravitational Physics (Albert Einstein Institute),
~\\
Am Mühlenberg 1, 14476 Potsdam-Golm, Germany, EU}
\begin{abstract}
We discuss the notion of symmetries in non-local field theories characterized
by integro-differential equations of motion, from a geometric perspective.
We then focus on Group Field Theory (GFT) models of quantum gravity
and provide a general analysis of their continuous point symmetry
transformations, including the generalized conservation laws following
from them.
\end{abstract}
\maketitle

\section*{Introduction}

Symmetry principles are omni-present in modern physics, and especially
in the quantum field theory formulation of fundamental interactions.
They enter crucially in the very definition of fundamental fields
and particles, they dictate the allowed interactions between them,
and capture key features of such interactions in terms of conservation
laws. They also offer powerful conceptual tools, for example, in the
characterization of macroscopic phases of quantum systems, as well
as many computational simplifications and powerful mathematical techniques
for numerical and analytical calculations.

Group Field Theory (GFT) \citep{Oriti:2006se,Oriti:2009wn,Oriti:2011jm,Oriti:2013aqa,Krajewski:2012aw,Oriti:2014uga}
is a promising candidate formalism for a fundamental theory of quantum
gravity. It can be seen as generalization of matrix and tensor models
for two and higher dimensional gravity in terms of random triangulations
\citep{Gross:1991hx,Gurau:2011xp,Rivasseau:2011hm,Carrozza:2013mna},
and as a 2nd quantized, quantum field theoretical reformulation of
Loop Quantum Gravity \citep{Oriti:2013aqa,Oriti:2014uga}. As such,
it is suitable for the adaptation of the standard symmetry analysis
techniques from ordinary quantum field theories. This is the goal
of the present article. At the same time, the intrinsically non-local
structure of GFTs requires non-trivial adaptation of the well-known
Lie group-based symmetry analysis and the generally curved nature
of the GFT base manifold makes the exact computations and the whole
analysis considerably more involved. We will tackle both types of
difficulties in the following, but it is the non-local character of
the theory that stands out as the truly defining challenge, so we
focus mainly on it, and limiting ourselves to the classical aspects
of the theory, which are interesting enough. We had laid some of the
mathematical foundations of the present analysis in a previous work
\citep{Kegeles:2015oua}, to which we will often refer in the following.

A possible definition of the non-locality in the case of GFT is the
integro-differential structure of the correspondent equations of motion.
From this perspective the symmetry analysis of these quantum field
theories could be mapped to the symmetry analysis of the integro-differential
equations of motion, at least in the classical approximation. The
Lie algebra approach to symmetries of integro-differential equations
is not new, and there is a quite extensive literature available on
this subject \citep{Zawistowski:2001bt,ibragimov2002symmetries,meleshko2010symmetries}.
However, in contrast to local field theories, the methods and strategies
of analysis strongly vary from case to case, and no universal method
is known yet. As a result, we do not know a priori which of the known
methods can be suitable adapted for the analysis of group field theories.
Particular examples for symmetry calculations for several GFT models
were calculated in \citep{BenGeloun:2011cz} but no systematic treatment
was proposed there.

In this paper we investigate the \textsl{variational} \textsl{symmetry}
groups of various prominent models in Group Field Theory, that is
symmetries of the action and not the (wider) symmetries of the equations
of motion. We consider only \textsl{point symmetries}, for several
interesting models, and investigate the consequences of the derived
symmetries in terms of generalized conservation laws, on the basis
of the framework performed in \citep{Kegeles:2015oua}. We will also
show that, when special types of matter fields are included in the
models, a possible definition of ``conserved'' charges can be derived
from the same generalized conservation laws. Also the restriction
to point symmetries will be motivated in the following, and we leave
the generalization to contact or Lie-Baecklund symmetries to be discussed
elsewhere. Similarly, we leave for further work the study of the consequences
of the symmetries we identify at the quantum level. While a detailed
analysis is of course needed, the main reason for postponing such
work is that we do not anticipate any additional difficulty (with
respect to the local quantum field theory case) in the derivation
of the corresponding Ward identities or in the study of possible anomalies.

The method we use in this paper is, to the large extent, the usual
Lie algebra calculations of symmetry groups, suitably modified to
adapt to the peculiar features of group field theories.

The presentation is organized as follows. We begin with a brief review
on Group Field Theories in section I. Then, after a recap of the definition
of various types of symmetries, in the geometric formulation of classical
field theories, we proceed with the symmetry analysis in Group Field
Theory in section II. In doing this, we clarify how to extend the
geometric treatment of field theories to the non-local case. At the
end of this section the reader can find the summarized table of symmetry
groups for the models under investigation. In section III we present
the correspondent Noether currents along with their generalized conservation
laws. The case, in which matter fields are present, is treated in
Section IV, with emphasis on scalar fields that can be used as relational
clocks, and the correspondent definition of relational charges.

\section{Group Field Theory }

In the first part of this section we give an informal definition of
the group field theory formalism and its general features, emphasizing
also the connections to other theories of gravity. In the second part
we introduce the specific models that we are going to study in the
rest of this paper. The notation and conventions introduced in this
section will be used throughout the paper.

A Group Field Theory \citep{Oriti:2009wn} is a quantum field theory
on a group manifold defined by a \textsl{non-local action} at the
classical level, and the corresponding path integral at the quantum
level. The fundamental fields of the theory are functions\textsl{
from a Lie Group} to some vector space (usually, the complex numbers)\footnote{In fact, the GFT formalism includes also the case in which the fields
are defined on a finite group, reducing to the \textsl{tensor models}
formalism. However, for obvious reasons our symmetry analysis would
not apply to this case, so we stick to the Lie group setting.}. In models directly related to lattice gravity and loop quantum gravity,
they are often assumed to satisfy a specific \textsl{gauge invariance}
condition, which indeed provides the perturbative Feynman amplitudes
of the model with a lattice gauge theory structure. This possibility
is, at its root, allowed by the choice of peculiar non-local pairing
of field arguments in the GFT interaction terms. In fact, this has
the immediate result that the perturbative expansion of the quantum
theory gives Feynman diagrams dual not just to graphs but to cellular
complexes, which can also be understood as discretization of some
smooth manifold. The Feynman amplitudes, whose explicit expression
is of course model-dependent, can be given the form of lattice gravity
path integrals \citep{Baratin:2010wi,Baratin:2011hp,Baratin:2011tx}
or, equivalently, spin foam models \citep{Baez:1997zt,Alexandrov:2010un,Geloun:2010vj,Alexandrov:2011ab,Perez:2012wv}.
The latter are a covariant definition of the quantum dynamics of spin
networks, the quantum states of Loop Quantum Gravity (LQG) \citep{Rovelli2008,Chiou:2014jwa}.
In fact, GFTs can be understood as a reformulation of the kinematics
and dynamics of LQG degrees of freedom in a 2nd quantized framework
\citep{Oriti:2013aqa}. The GFT Hilbert space re-organizes the same
type of spin networks in a Fock space, whose fundamental quanta are
spin network vertices and are created (annihilated) from (into) a
Fock vacuum (a state with no geometric nor topological structure)
by the action of the GFT field operators, absent any embedding information
into any ambient smooth continuum manifold.

In the following we will discuss each point in more detail, in order
to convey the general idea and motivation behind each of the above
features.

\paragraph{Non-local action.}

In local theories the action is an integral functional on the (appropriate)
space of fields, whose kernel is the Lagrangian. In the geometrical
interpretation a Lagrangian is a function on a vector bundle which
is usually a jet bundle over a principle $G$-bundle where $G$ is
the fundamental symmetry group of the theory. In usual field theories
of fundamental interactions, the base manifold of the vector bundle
is interpreted as space time and the fiber is a vector space that
carries a representation of $G$.

In non-local theories we want to maintain this geometrical picture
as far as possible, even if in our GFT context the base manifold will
not have the interpretation of spacetime (but is rather related to
\textsl{superspace}, the space of smooth spatial geometries, or minisuperspace,
the space of homogeneous spatial geometries; see \citep{Gielen:2013naa,Gielen:2014uga,Gielen:2014ila,Oriti:2016qtz}).
We assume that a non-local function can be treated as a local function
on a higher dimensional space. The drawback of this picture is that
different non-local therms are described by different geometrical
bundles, and the geometrical structure of the theory strongly depends
on the model in question, in contrast to the local case.

In a local theory the action is an integral of a Lagrangian over some
domain $\Omega$. In non-local theory it becomes a sum of integrals
whose domains are different base manifolds (in particular, of different
dimension). In the specific types of non-local field theories we will
be concerned with, they correspond to different numbers of copies
of a given base manifold. The action can therefore be generally written
as 
\begin{equation}
S=\sum_{i}\int_{M_{i}}\,L^{i}\,\text{vol}_{i},\label{eq:Non local action}
\end{equation}
where $i$ ranges over the number of different base manifolds $M_{i}$
with the correspondent Lagrangians $L^{i}$. The measure of integration
on each manifold defines the volume density, and in the GFT case,
will be given by the Haar measure or some other invariant measure,
depending on the Lie group chosen as base manifold.

\paragraph{Lie group structure of the base manifold.}

Indeed, in Group Field Theory, we require the local base manifold,
which matches by definition the domain of each individual GFT field,
to be given by some number of copies of a Lie group $G$. In GFT models
of quantum gravity, this is usually chosen to be the local gauge group
of gravity, i.e. the Lorentz group or it double cover $SL\left(2,\mathbb{C}\right)$
(or its Riemannian counterpart $Spin(4)$ for models of gravity in
Euclidean signature). For models directly related to Loop Quantum
Gravity, the rotation subgroup $SU\left(2\right)$ of the Lorentz
group becomes the relevant base manifold, via appropriate conditions
imposed on the GFT fields (and quantum states) at the dynamical level.

The number of copies of the group $G$ defining the local base manifold
is usually the topological dimension of the cellular complexes dual
to the Feynman diagrams of the model, chosen to match the dimension
of the continuum spacetime one aims to reconstruct from the quantum
dynamics of the model. Clearly, to have a physical connection to General
Relativity we need to develop and understand models in four dimensions.
However, since these tend to be naturally more involved than their
lower dimensional counterparts, several two and three dimensional
models have been studied, which allow to investigate important mathematical
and conceptual problems of the theory, in the setting with reduced
complexity.

\paragraph{Gauge invariance of fields}

In some GFT models the fields are required to satisfy a so-called
gauge invariance condition. Explicitly it means that for any $h$
in the diagonal subgroup $G_{D}=\left\{ \left(g,\cdots,g\right)\in G^{\times n}\vert g\in G\right\} $
the fields satisfy 
\begin{equation}
\phi\circ R_{h}=\phi,\label{eq:Gauge invariance of fields}
\end{equation}
where $R_{h}:G^{\times n}\to G^{\times n}$ denotes the right multiplication
by $h\in G_{D}$. Due to this gauge invariance condition, the base
manifold of the GFT fields is effectively reduced to a quotient of
$n$ copies of the group under the diagonal group action 
\[
G\times\cdots\times G/G_{D}.
\]
This condition is imposed for many different physical considerations.
In particular, in gauge invariant GFT models the perturbative Feynman
amplitudes of the theory take the form of lattice gauge theories (on
the cellular complex dual to each Feynman diagram) and the quantum
states become those of a lattice gauge theory with gauge group $G$
(in particular, for $G=SU(2)$, a complete orthonormal basis is given
by spin networks, and the same amplitudes can be equivalently written
as spin foam models).

\subsubsection*{Other defining features of different GFT models}

Beside the choice of base manifold, there are various other ingredients
that have to be specified, in order to fully define a GFT model, and
this even before one chooses a functional form for the interaction
kernels. The main differences between various models include: 

\paragraph{Presence (absence) of derivatives in the local Lagrangian - dynamical
(static) models}

The GFT models proposed at first, for a study of topological field
theories of BF type and, later, for 4d quantum gravity described as
a constrained BF theory and in absence of matter fields, only possessed
a ``mass'' term, and no derivatives of the fields in their quadratic,
local part of the Lagrangian. In local QFT, this would imply a trivial
dynamics, and therefore one could label these models \textquotedbl{}static\textquotedbl{}.
However, in contrast to local field theories, the transition functions,
and more generally both the classical and quantum dynamics in \textquotedbl{}static\textquotedbl{}
group field theories are still highly non-trivial due to their non-local
nature. More recently, GFT models which include derivative terms in
the quadratic part of the action have been studied extensively. They
can be motivated in various ways, the first being that renormalizability
seems to require them, at least for some models in which a non-trivial
dynamical term will be generated by the RG-flow \citep{Geloun:2011cy,Geloun:2013zka},
and therefore needs to be included in the theory space.

\paragraph{Non-local structure of the Lagrangian - combinatorics}

As said, the main feature distinguishing GFTs from ordinary local
field theories is the combinatorial pattern of relations between field
arguments in the GFT interactions. In principle many different non-local
interaction terms can be included in the action. A preference of one
model over another can be given conclusively only by extracting its
physical consequences. We will discuss the possible combinatorics
and their consequences rather extensively in the following. Here we
only stress that a detailed analysis of symmetries of the corresponding
models is going to be useful also for choosing one combinatorial structure
over another.

\paragraph{Number of different fundamental fields - colored theories}

One can also consider GFT models involving more than a single fundamental
field. If a Group Field Theory model involves more than one field,
we call the model \textsl{colored} and distinguish different fields
by an additional index, calling it the color index. Colored GFTs were
introduced for the first time in \citep{Gurau:2009tw} for models
aimed at describing simplicial quantum gravity, and topological BF
theories discretized on simplicial complexes. Indeed, this step immediately
led to a large number of interesting mathematical results and powerful
new techniques, in the GFT context as well as for the simpler tensor
models \citep{Gurau:2011xp}. In particular, Feynman diagrams generated
by non-colored simplicial GFT models can be dual to very singular
simplicial complexes, while Feynman diagrams of colored models are
much more regular, and their topology can be reconstructed to a much
greater extent \citep{Gurau:2010nd}.

\paragraph{Quantum statistics}

An additional assumption on the theory is its quantum statistics,
i.e. whether the GFT quanta of a given model are bosonic, fermionic
or of other nonstandard statistics. In local, spacetime-based quantum
field theories the quantum statistics is highly constrained by powerful
spin-statistics theorems, linking the quantum statistics to the spin
of the quanta, under the assumption of Lorentz or Poincare invariance
of the theory. No similar spin-statistics theorem is available, yet,
in the GFT framework. First and foremost, this is due to the fact
that the base manifold in group field theories is not directly associated
with space time, thus we have no obvious symmetry requirement to impose,
like Lorentz or Poincare invariance. Second, as we have already stressed
as a motivation for our work, very little is known about GFT symmetries,
from a full classification of them in specific models to their general
consequences, on the statistics of the same models and on their physics.
This paper is meant to partially fill this gap. 

\subsubsection*{Notation}

Throughout the paper we will denote an element of the Lie group $G^{\times n}$
as 
\begin{equation}
\vec{g}=\left(g_{1},g_{2},g_{3},\cdots,g_{n}\right).
\end{equation}
Differential operators with an index will refer to operators that
act on the correspondent copy of the group. For example, 
\begin{align}
\nabla_{1}\left(\vec{g}\right) & =\left(\nabla g_{1},g_{2},\cdots,g_{n}\right).
\end{align}
Differential operators with more than one index refer to a sum of
individual operators as 
\begin{equation}
\nabla_{123}=\nabla_{1}+\nabla_{2}+\nabla_{3}.
\end{equation}
If used without further clarification, an integral symbol (without
the explicit measure) denotes an integral over all variables that
appear under the symbol. Integration over each single group element
is performed with the Haar measure on $G$ 
\begin{equation}
\int\phi\left(g_{1},g_{2},g_{3}\right)=\int\text{d}g_{1}\text{d}g_{2}\text{d}g_{3}\,\phi\left(g_{1},g_{2},g_{3}\right).
\end{equation}
This notation will be used a bit differently in section III, where
we will point out the differences explicitly. The fields are complex
scalar fields. The upper script of the field denotes the color and
the subscript denotes the field's dependence on the variables 
\begin{equation}
\phi_{1,2,\cdots,n}^{c}:=\phi^{c}\left(g_{1},g_{2},\cdots,g_{n}\right).
\end{equation}

\subsection{Overview of the models discussed in the paper}

We are going to present the major distinctions of combinatorial structures
of models discussed in the following. The general structure of the
GFT actions has the form of \ref{eq:Non local action} as 
\begin{equation}
S\left[\phi\right]=S^{loc}\left[\phi\right]+S^{nloc}\left[\phi\right].
\end{equation}
We assume that the local, quadratic part of the action is defined
as 
\begin{equation}
S^{loc}\left[\phi\right]=\int_{G^{\times n}}\,\kappa\nabla\bar{\phi}\cdot\nabla\phi+m\,\bar{\phi}\phi,\label{eq:local part}
\end{equation}
where $\nabla$ is the gradient on the group $G$ and the $\cdot$
denotes the contraction of the two vectors. We will also treat cases
in which $\kappa$ is zero, meaning that the model is a static one.
In the following, we distinguish between three different types of
models - simplicial, tensorial and geometrical. The corresponding
interaction parts $S^{I}$ are presented below and a concise summery
of the interaction terms used is given in the table \ref{tab:Overview-of the models}.

\subsubsection{Simplicial}

The interaction part of simplicial models is constructed such that
the Feynman diagrams have a particular topological interpretation,
i.e. they are simplicial complexes. Let us illustrate in one example
how this comes about, in a simple example: a GFT model for BF theory
in three dimensions and the GFT fields subjected to the gauge invariance
condition. If the group is chosen to be $SU(2)$, this is a model
for 3d gravity in euclidean signature. In this case the simplicial
interaction is given by 
\begin{equation}
S^{I}\left[\phi^{c}\right]=\lambda\int_{\Omega}\,\phi_{1,2,3}\phi_{3,4,5}\phi_{5,2,6}\phi_{6,4,1}+c.c.\;,\label{eq:Boulatove usual}
\end{equation}
where the integral domain is $\Omega=G{}^{\times6}$. If we associate
to each field in the interaction a triangle with edges labelled by
the three arguments of the same field, the interaction will be associated
to a tetrahedron (a 3-simplex) formed by the 4 triangles glued pairwise
along common edges, as shown in figure \ref{fig:Geometrical-interpretation-of}.

\vspace{-1cm}
\begin{figure}[h]
\subfloat[3 dim field associated with a triangle\label{fig:Field-in-three-a}]{\begin{tikzpicture}[scale = 0.5]
% Interaction
\coordinate (0) at (0,0);
\node[circle, inner sep=1pt, fill=black] at (0) {};
\foreach \a in {90,210,330}{
\coordinate (\a) at (\a: 2.2cm);
}

\filldraw[draw=gray!80!white,fill=gray!20!white] (90) -- (210) -- (330)--cycle;

\begin{scope}[very thick,gray]
\draw (0) -- (30:1.4cm);
\draw (0) -- (150:1.4cm);
\draw (0) -- (270:1.4cm);
\end{scope}
\node[circle,fill=gray!20!white]  at(0) {$\phi$};
\end{tikzpicture}

}\hspace*{1cm}\subfloat[Simplicial interaction in three dimensions associated with a tetrahedron\label{fig:Simplicial-interaction-inb}]{\begin{tikzpicture}[scale = 0.5]
\node[circle, inner sep=2pt, fill=black,gray] at (0) {};
% Interaction
\coordinate (0) at (0,0);
\node[circle, inner sep=2pt, fill=black] at (0) {};
\foreach \a in {90,210,330}{
\coordinate (\a) at (\a: 2.2cm);

}

%\filldraw[draw= gray!10!white,fill = gray!100!white, opacity=0.5] (0) -- (180) -- (275) -- cycle;
%\filldraw[draw= gray!10!white,fill = gray!70!white, opacity=0.5] (95) -- (180) -- (275) -- cycle;
%\filldraw[draw= gray!10!white,fill = gray!90!white, opacity=0.8] (0) -- (95) -- (275) -- cycle;

\path (0) edge[thick, dashed, gray]
	(90) edge[bend right=20, gray, thick , dashed] (210)
			edge[bend left=20, thick, gray, dashed] (330);

\filldraw[draw=gray,fill=gray!30!white, opacity =0.5] (0,-1) -- (2,1) -- (-2,1)-- cycle;
\filldraw[draw=gray,fill=gray!60!white, opacity =0.5] (0,-1) -- (-2,1) -- (0,-3)-- cycle;
\filldraw[draw=gray,fill=gray!100!white, opacity =0.5] (0,-1) -- (0,-3) -- (2,1)-- cycle;

\path (90) 	edge[bend right, very thick,gray] (210)
				edge[bend left, very thick,gray] (330);
\path (210) edge[bend right=40, very thick,gray] (330);

\node[anchor=south west] at (0) {$\textcolor{gray}{\phi}$};
\node[anchor=south] at (90) {$\phi$};
\node[anchor=east] at (210) {$\phi$};
\node[anchor=west] at (330) {$\phi$};

\foreach \a in {90,210,330}{
\node[circle, inner sep=2pt, fill=black,gray] at (\a) {};
}
\end{tikzpicture}

}

\caption{Topological interpretation of the field and the simplicial interaction
in three dimensions\label{fig:Geometrical-interpretation-of}}
\end{figure}
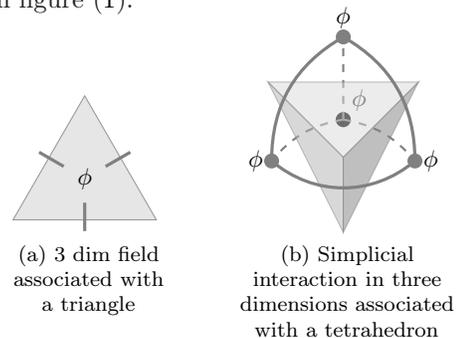

The Feynman diagrams will similarly be in correspondence with the
simplicial complexes obtained by gluing the different tetrahedra associated
to the interaction vertices in the Feynman diagram, along shared triangles,
the gluing being identified by a propagator line. Depending on the
details of the Feynman diagram, the resulting simplicial complex may
or may not be a simplicial manifold and can be quite singular. Moreover,
the data present in the GFT diagram are not, in general, sufficient
for reconstructing the topology of the same simplicial complex in
its entirety. Such technical difficulties are cured by introducing
colored fields \citep{Gurau:2010nd}.

The extension to higher dimensions, via extension of the base manifold
of the fields, and appropriate pairing of their arguments in the interactions,
follows the same criteria and it is straightforward.

\subsubsection*{Remarks on the combinatorial structure of simplicial models}

The original Boulatov model \citep{Boulatov:1992vp} for 3d gravity
has a slightly different combinatorial structure from the one we introduced
above. It is given by 
\begin{equation}
S^{I}\left[\phi^{c}\right]=\lambda\int_{\Omega}\,\phi_{1,2,3}\phi_{1,4,5}\phi_{2,5,6}\phi_{3,6,4}+c.c.,
\end{equation}
with an additional invariance of fields under cyclic permutations
of the variables. The quantum geometric content of the model is not
affected, as it can be seen both in the group representation, and
in the spin representation. In fact, in the Peter-Weyl decomposition
both combinatorial structures lead to a 6J symbol, which encodes both
the gauge invariance properties and the piece-wise flatness of the
simplicial complex generated in the perturbative expansion. However,
while the original Boulatov model produces a usual 6J symbol, the
interaction pattern we introduced above produces an additional factor
$\left(-1\right)$ that alternates with the representations involved.

In the case of colored models this difference can be absorbed in the
redefinition of the fields as $\tilde{\phi}^{i}=\phi^{i}\circ P^{i}$,
where $P^{i}$ being some permutation of the group elements. In this
way the exact order of the variables in the field becomes unimportant.
We will choose the following combinatorics, since it leads, as we
will show, to the largest symmetry group: 
\begin{equation}
S^{I}\left[\phi^{c}\right]=\lambda\int_{\Omega}\,\phi_{1,2,3}^{1}\phi_{1,4,5}^{2}\phi_{6,2,5}^{3}\phi_{6,4,3}^{4}+c.c..
\end{equation}
It is important to mention that if the models are indeed equivalent,
their symmetry group should not differ as well. This implies that
a particular choice of the combinatorics may simply help to discover
symmetries that would still be there for different combinatorics,
but would be more difficult to identify. In the next section we will
show how these minor combinatorial differences affect the symmetry
group.

Notice that, while in the action above we have chosen four GFT fields
to appear in one term, with their complex conjugates appearing in
the other, our focus here was only the combinatorial structure, and
one can devise simplicial interactions involving both the field and
its complex conjugate in the same monomial. For example, we can start
with the action from equation \ref{eq:Boulatove usual} and color
the fields in the way $\phi^{1}=\phi^{3}=\phi$ and $\phi^{2}=\phi^{4}=\bar{\phi}$
such that the interaction part coincides with its complex conjugate
\begin{equation}
S^{I}\left[\phi^{c}\right]=\lambda\int_{\Omega}\,\phi_{1,2,3}\bar{\phi}_{3,4,5}\phi_{5,2,6}\bar{\phi}_{6,4,1}.
\end{equation}
In this case we refer to the above action as colored, with two colors,
even though the model involves only the field $\phi$ and its complex
conjugate. This convention will become handy in the classification
of the symmetries in the following. 

\subsubsection{Tensorial models}

Tensor models are characterized by an $U\left(N\right)$ invariance.
Given a rank-n complex tensor $T_{i_{1}...i_{n}}$ with index set
of dimension $N$, it transforms naturally under the group $U(N)^{\times n}$,
where $U(N)$ is a unitary $N\times N$ matrix, acting on each of
its indices. This is also the natural symmetry of tensor interactions,
so that the full theory space is defined to be spanned by all possible
monomials in the tensor and its complex conjugate, with their indices
contracted to give unitary invariants \citep{Gurau:2011xp,Bonzom:2012hw}.
The generalization of the same invariance characterizes the interactions
of tensorial GFTs \citep{Carrozza:2013mna,Benedetti:2014qsa}. Hereby
a monomial in fields belongs to the theory space if it is invariant
under a unitary transformation defined as follows 
\[
U\phi\left(\vec{g}\right)=\int\text{d}\vec{h}\,U^{1}\left(g_{1},h_{1}\right)U^{2}\left(g_{2},h_{2}\right)U^{3}\left(g_{3},h_{3}\right)\phi\left(\vec{h}\right),
\]
with the requirement on the kernels $U^{i}$ to satisfy 
\begin{equation}
\int_{\Omega}\text{d}h\,U\left(g,h\right)U^{\dagger}\left(h,q\right)=\delta\left(g,q\right),
\end{equation}
where $U^{\dagger}\left(h,g\right):=\bar{U}\left(g,h\right)$. This
conditions requires that two fields which share a group element need
to be complex conjugate of each other. It is easy to verify that this
excludes the simplicial combinatorics. It is also important to mention
that the kernels $U^{i}$ do not need to be smooth, differentiable
or even continuous and for this reason they may include delta distributions.We
will come back to this point in the next section, when we discuss
the symmetries of the tensorial models.

Note that a dynamical term will in general break the unitary invariance.
Therefore, when we refer to tensorial dynamical models in the following,
we imply that the unitary invariance characterizes only the interaction
part and not of the whole action. Note also that one can have a very
similar type of invariance for real GFT fields, with the unitary group
replaced by an orthogonal group. The construction proceeds in analogous
way.

\begin{table*}[t]
\begin{centering}
\renewcommand*{\arraystretch}{2}%
\begin{tabular}{|>{\raggedright}m{0.2\textwidth}|>{\centering}m{0.15\textwidth}|c|>{\centering}m{0.15\textwidth}|c|}
\hline 
 & \multicolumn{2}{>{\centering}m{0.3\textwidth}|}{\textbf{Gauge variant}} & \multicolumn{2}{>{\centering}m{0.3\textwidth}|}{\textbf{Gauge invariant}}\tabularnewline
\hline 
\textbf{Dimension}  & \multicolumn{4}{c|}{$3D$ and $4D$}\tabularnewline
\hline 
\multirow{2}{0.2\textwidth}{\textbf{Kinetic part}} & \multicolumn{4}{c|}{$\kappa\,\nabla\phi\cdot\nabla\phi+m\,\phi\phi$}\tabularnewline
\cline{2-5} 
 & \multicolumn{2}{c|}{$\kappa\neq0$} & $\kappa\neq0$  & $\kappa=0$\tabularnewline
\hline 
\textbf{Group}  & $SU\left(2\right)$  & $SU\left(2\right)$  & $SU\left(2\right)$  & $G$\tabularnewline
\hline 
\textbf{Combinatorics}  & Tensorial  & Simplicial  & \multicolumn{2}{c|}{Simplicial}\tabularnewline
\hline 
\textbf{Colors}  & -  & (un)colored  & \multicolumn{2}{c|}{(un)colored}\tabularnewline
\hline 
\textbf{Simplicity constrains }  & -  & -  & -  & Barrett-Crane\tabularnewline
\hline 
\end{tabular}
\par\end{centering}
\centering{}\caption{Overview of the models discussed in this paper\label{tab:Overview-of the models}}
\end{table*}

\subsubsection{Extended Barrett-Crane model}

In four and higher dimensions, gravity can be formulated as a BF theory
plus appropriate constraints \citep{Freidel:1999rr}, which are labeled
\textsl{simplicity constraints}. This goes under the name of Plebanksi
formulation of gravity. This formulation provides also the conceptual
and technical starting point for the construction of spin foam and
group field theory models for 4d quantum gravity. One may call the
corresponding GFT models \textsl{geometric}, even though one has a
direct control only on the discrete (simplicial) geometric interpretation
of states and amplitudes, while the reconstruction of continuum geometry
requires more work. As an example of these constructions, we deal
with the so-called Barrett-Crane model \citep{Barrett:1997gw}, whose
detailed treatment in the language of extended Group Field Theory
was presented in \citep{Baratin:2011tx}, for the euclidean signature.
Here we show just the main features of the model and refer to the
cited literature for more details.

The starting point is the GFT model for 4d BF theory based on simplicial
interactions, in which the fundamental GFT field is associated to
a tetrahedron in 4d, and the interactions involve five GFT fields,
paired to represent the gluing of five tetrahedra to form a 4-simplex.
The base group manifold of the model is $Spin\left(4\right)$.

Simplicity constraints are characterized by a vector in $S^{3}\simeq SU\left(2\right)$,
interpreted as a unit normal vector (in $\mathbb{R}^{4}$) of the
tetrahedron represented by the field $\phi$. In order to keep track
of this additional normal vector the local base manifold is extended,
so that the field becomes a function on four copies of $Spin\left(4\right)$
and one copy of $SU\left(2\right)$ 
\begin{equation}
\phi\left(g^{1},g^{2},g^{3},g^{4},k\right)=:\phi_{1,2,3,4,k},
\end{equation}
where $g_{i}\in Spin\left(4\right)$ and $k\in SU\left(2\right)$.

The interaction of the model becomes an extended version of the Ooguri
interaction given as

\begin{eqnarray}
S^{I}\left[\phi\right] & = & \int\phi_{1,2,3,4,k_{1}}\phi_{4,5,6,7,k_{2}}\phi_{7,3,8,9,k_{3}}\phi_{9,6,2,10,k_{4}}\phi_{10,8,5,1,k_{5}}\nonumber \\
 &  & +\,c.c..\label{eq:BC interaction}
\end{eqnarray}
The gauge invariance is again written in the usual form as 
\begin{equation}
\phi\circ R_{h}=\phi,
\end{equation}
with $h\in Spin\left(4\right)_{4D}$. Additionally, the simplicity
constraints are imposed by requiring invariance of the fields 
\begin{equation}
\phi\circ S=\phi,
\end{equation}
under the transformation 
\begin{equation}
S:\left(\vec{g},k\right)\mapsto\left(\vec{g};k\right)\cdot\left(\left(k\vec{u}k^{-1},\vec{u}\right);\mathds{1}\right),
\end{equation}
where $u^{j}\in SU\left(2\right)$. If we write a $Spin\left(4\right)$
element in its selfdual and anti-selfdual $SU\left(2\right)$ components
as $g=\left(g_{-},g_{+}\right)$, the above transformation takes the
form 
\begin{equation}
S\left(\vec{g};k\right)=\left(g_{-}^{j}\,ku^{j}k^{-1},g_{+}^{j}u^{j};k\right).\label{eq:Simplicity constraint}
\end{equation}
In \citep{Baratin:2011tx} it has been shown that $S$ and $R_{h}$
commute as projectors acting on the space of fields, which allows
to combine them into a single transformation, which is itself a projector,
acting on the GFT fields as

\begin{eqnarray*}
\mathcal{S}: & \left(\vec{g};k\right) & \mapsto\left(\mathds{1};h_{-}^{-1}\right)\cdot\left(\vec{g};k\right)\cdot\left(\left(k\vec{u}k^{-1},\vec{u},\right);\mathds{1}\right)\cdot\left(h_{-};h_{+}\right),
\end{eqnarray*}
where $h\in Spin\left(4\right)$. One can indeed verify that the fields
invariant under the above transformation satisfy 
\begin{equation}
\phi\circ\mathcal{S}=\phi.
\end{equation}

Notice that, since the simplicity and gauge invariance conditions
are imposed on the fields via a projector, the imposition of these
conditions on all fields appearing in the action is the most natural
choice, but any other choice, e.g. imposing them only on the fields
appearing in the interactions, would result in the same Feynman amplitudes
(but not the same theory, as for example the classical equations of
motion would be different). This is not true for other 4d gravity
models, where the simplicity constraints take a different form \citep{Baratin:2011hp}.

\section{Symmetries}

In this section we will present the different notions of symmetry
transformations in non-local field theories in general (recalling
the geometric construction in the local case, first), and then apply
them to Group Field Theory in particular, and derive the symmetry
groups for the models introduced above, showing the main steps of
the calculations for three-dimensional models.

\subsection{Transformations of local field theory}

The geometrical construction of local field theory is very well known,
but we will briefly review its main points here because they will
be essential in the following discussion of the non-local case.

In the geometrical picture, the Lagrangian is a differentiable (in
a sense that needs to be further specified) function on an $n$th
order jet bundle. In order to bring the main idea across without complicating
it with technical details, we will assume that the Lagrangian is a
function just on a vector bundle. The full construction can be found
in usual text books on this subject some of which are \citep{olver2000applications,meleshko2010symmetries}.

We call the relative vector bundle $E$, the base manifold of $E$
being $M$ and the fiber being $\mathbb{V}$. Locally, we can think
about $E$ as a cross product of $M\times\mathbb{V}$, which we assume
for the rest of this discussion. The points on $E$ are then given
by $x\in M$ and $u\in\mathbb{V}$, we write $\left(x,u\right)\in E$.
Hence, the values of the Lagrangian can be denoted as $L\left(x,u\right)\in\mathbb{R}$.

We then introduce the physical fields $\phi$ in the construction.
This is done by choosing points of the vector bundle which are given
by a smooth section of $E$. In other words we assume that $u=\phi\left(x\right)$.

Assuming that the set of transformations of the theory forms a group
$G_{T}$, we can write the action of the group on $E$ as 
\begin{equation}
g\cdot\left(x,u\right)=\left(\tilde{x},\tilde{u}\right)=\left(C\left(x,u\right),Q\left(x,u\right)\right).\label{eq:Lie point trafos}
\end{equation}
The action is thus specified by two functions $C$ and $Q$. Note,
that in general both functions depend on $x$ and $u=\phi\left(x\right)$
and are not invertible. However, locally around each point of the
bundle these transformations are diffeomorphisms, due the the fact
that they represent an action of a Lie group.

We ask for the transformed sections $\tilde{\phi}$ that corresponds
to a new point of the bundle, that is $\left(\tilde{x},\tilde{\phi}\left(\tilde{x}\right)\right)=\left(\tilde{x},\tilde{u}\right)$.
The transformed fields $\tilde{\phi}$ can then be seen as transformed
sections under the group action of $g$. It is a well known result
that the transformed fields are given by 
\begin{equation}
\tilde{\phi}\left(\tilde{x}\right)=Q\left(C^{-1}\left(\tilde{x}\right),\phi\circ C^{-1}\left(\tilde{x}\right)\right),
\end{equation}
or, in short, 
\begin{equation}
\phi\overset{g}{\mapsto}Q\circ\phi\circ C^{-1},\label{eq:trafo of the fields}
\end{equation}
at least as long as $C$ is invertible. Hereby, the transformation
$Q$ is defined along the fiber and the $C^{-1}$ accounts for the
transformation of the base manifold. 

We summarize the main properties of the maps $Q$ and $C$ before
finishing this part. For a fixed $\phi$ the base manifold transformation
$C$ is a local automorphism 
\begin{eqnarray}
C_{\phi}: & M & \to M\\
 & x & \mapsto C\left(x,\phi\left(x\right)\right).\nonumber 
\end{eqnarray}
And for a given point $x\in M$, the fiber transformation $Q$ is
a local automorphism 
\begin{eqnarray}
Q_{x}: & \mathbb{V} & \to\mathbb{V}\\
 & \phi\left(x\right) & \mapsto Q\left(x,\phi\left(x\right)\right).
\end{eqnarray}

\subsection{Transformations of a non-local field theory}

We now apply this construction to non-local field theories. As we
have pointed out earlier, the action is given by the sum of integrals
over Lagrangians 
\begin{equation}
S=\int_{M_{i}}\,L^{i}.
\end{equation}
Hereby, each of the Lagrangians is a function from a vector bundle
$E_{i}$ to $\mathbb{R}$. For $i\neq j$ the vector bundles $E^{i}$
and $E^{j}$ are assumed to be different. If they are not, we can
combine the Lagrangians $L^{i}$ and $L^{j}$ into a single Lagrangian
$L^{ij}=L^{i}+L^{j}$.

Following the general construction from the previous section we define
a transformation of the theory as transformation of the corresponding
vector bundles. Nevertheless, in the non-local case we need to transform
different bundles, which is why we say that a group action is given
by functions $C^{i},Q^{i}$ such that, for each $i$, $C^{i}$ and
$Q^{i}$ are transformations of $E^{i}$ in the above (local) sense.
It is important to realize that these transformations can not be independent
from one another, since they represent the same transformation $g\in G_{T}$.
Instead, their mutual relations should be given by the relation between
different vector bundles $E^{i}$.

Assuming that $E^{0}$ denotes the vector bundle, whose sections are
identified with physical fields $\phi^{0}=\phi$ we can quite generally
write each $E^{i}$ as a pull back of $n_{i}$ copies of $E^{0}$
by some embedding $f_{i}:M^{i}\to M^{\times n_{i}}$. These functions
$f^{i}$ encode the combinatorial structure of non-local Lagrangians
and provide a relation between different $E^{i}$'s. Therefore they
also give the relations between the sections $\phi^{i}$ as $\phi^{i}=\left(\phi^{0}\right)^{\times n_{i}}\circ f$.
Knowing how the field $\phi^{0}$ transforms under $g$ automatically
implies the transformations of $\phi^{i}$ as 
\begin{equation}
Q^{i}\circ\phi^{i}\circ\left(C^{i}\right)^{-1}=\left[Q^{0}\circ\phi^{0}\circ\left(C^{0}\right)^{-1}\right]^{\times n_{i}}\circ f.\label{eq:trafo of the sections}
\end{equation}
This relation implicitly defines $C^{i}$ as 
\begin{equation}
f\circ C^{i}=\left(C^{0}\right)^{\times n_{i}}\circ f,\label{eq:Relation between C's}
\end{equation}
and $Q^{i}$ as 
\begin{equation}
Q^{i}\circ\phi^{i}=\left[Q^{0}\circ\phi^{0}\right]^{\times n_{i}}.
\end{equation}

The above equations provide the missing link between the group actions
on different vector bundles. However, equation \ref{eq:Relation between C's}
does not always define a local automorphism $C^{i}$ on $M^{i}$.
If $C^{i}$ were an automorphism, equation \ref{eq:Relation between C's}
would imply that, for any $x^{i}\in M^{i}$, there exists an $\tilde{x}^{i}\in M^{i}$
such that 
\begin{equation}
f\left(\tilde{x}^{i}\right)=\left(C^{0}\right)^{\times n_{i}}f\left(x^{i}\right).\label{eq:condition on the group action}
\end{equation}
This however, is not always possible as we will see in the following
section.

If equation \ref{eq:condition on the group action} is not satisfied,
the group action can not be chosen consistently as a transformation
of the vector bundle. In this case, we can define the action of the
group directly on the space of fields, using equation \ref{eq:trafo of the sections}
as 
\[
g\cdot\left(x,u\right)=\left(x,Q\left(x,u,u_{x},\cdots\right)\right)=\left(x,\phi\circ C^{-1}\left(x\right)\right).
\]
Here, $u_{x}$ denotes the coordinates of the Jet space and refers
to derivatives of fields at the point $x$, i.e. $u_{x}=D\phi\vert_{x}$.
That the transformation $Q$ needs to depend on the derivatives of
fields $\phi$ is easily seen from the Taylor expansion, since 
\begin{align}
\phi\circ C^{-1}\left(x\right) & =\phi\left(x\right)-D\phi\left(X_{M}\right)+\mathcal{O}\left(X_{M}^{2}\right)\\
 & =u-u_{x}\cdot X_{M}+\mathcal{O}\left(X_{M}^{2}\right),
\end{align}
where $X_{M}$ is the infinitesimal generator of the transformation
$C$. Such transformations $Q$ generalize the notion of point transformations
from equation \ref{eq:Lie point trafos} to the so called Lie-Baecklund
transformations, which are transformations from the Jet bundle to
the vector bundle. In the next section, we will briefly explain how
the Lie-Baecklund transformations represent a more general notion
of symmetry. However, for reasons that will also become apparent in
the next section, we will restrict our analysis to Lie point symmetries.

\subsection{Notions of symmetry transformations}

There are many different notions of a continuous symmetry in local
field theories. Almost all of them are formulated as diffeomorphisms
of the vector bundle of the theory. In order to distinguish between
different notions of symmetries we first point out that an action
(in local theories) is defined as an integral, and therefore intrinsically
depends on the domain of integration over which the Lagrangian is
integrated, i.e. $S_{\Omega}\left[\phi\right]=\int_{\Omega}\,L$.
In this sense we can talk about a family of actions $\left\{ S_{\Omega'}\right\} $
for all $\Omega'\subseteq\Omega$. In the discussion of symmetries,
the dependence of the action on the domain plays a very important
role, which we are going to highlight in the following.

\subsubsection{Point symmetries}

The simplest notion of a symmetry of an action is a diffeomorphism
on the vector bundle of the theory \citep{olver2000applications,lie1880theorie}.
As we discussed above choosing a section of the bundle (physical field
$\phi$) it is possible to locally project the diffeomorphism on the
fiber and the base manifold obtaining the transformation function
$Q$ and $C$, which define a transformation of the fields and of
the base manifold respectively. A symmetry is then a transformation
which does not change the action functional $S_{\Omega'}$ for any
subdomain $\Omega'\subset\Omega$ 
\begin{equation}
S_{\Omega'}\left[\phi\right]\to S_{C\left(\Omega'\right)}\left[\tilde{\phi}\right]=S_{\Omega'}\left[\phi\right]\quad\forall\Omega'\subset\Omega\quad.
\end{equation}
These transformations are of the type \ref{eq:Lie point trafos} and
are called \textsl{Lie point symmetries} or ``\textsl{geometrical}''
symmetries, because they admit a geometrical interpretation of a flow,
being generated by vector fields on the vector bundle of the theory.

The requirement that the symmetry does not change the action for any
sub domain $\Omega'$ is essential, in order to be able to make point-wise
statements, i.e. to derive truly local statements from the existence
of the symmetry itself. In the physical literature, this statement
is often referred to as the Noether theorem, which allows the derivation
of point-wise equations of the form $\text{div}\left(J\right)=E_{L}\cdot\delta\phi$,
i.e. the conservation laws of the corresponding field theory. 

\subsubsection{Generalized symmetries}

A generalization of the symmetry concept (already introduced by Noether
in her original paper \citep{Noether1971aaa}) leads to the so called
\textsl{generalized symmetries}, which are Lie-Baecklund transformations
that can change the action by an arbitrary divergence term. That is,
for all $\Omega'\subset\Omega$ 
\begin{equation}
S_{\Omega'}\left[\phi\right]\to S_{C\left(\Omega'\right)}\left[\tilde{\phi}\right]=S_{\Omega'}\left[\phi\right]+\int_{\Omega'}\,\text{div}\left(\Gamma\right)\quad\forall\Omega'\subset\Omega\quad.
\end{equation}
A restriction of such transformations to those that can depend at
most on the first order derivatives of fields defines the so called
contact symmetries.

In general, a set of Lie-Baecklund symmetries is infinitely large,
but it is often the case that also infinitely many such transformations
are equivalent, leading to a finite number of inequivalent transformations.
Computational algorithms for finding Lie-Baecklund symmetries to a
fixed order of derivative dependencies are known and are implemented
in a large variety of computer algebra programs \citep{fushchich1989computer}.
Nevertheless, already for flat base manifolds (and of course, local
theories) the explicit calculations are quite challenging.

The reason for looking for a generalized notion of symmetry is the
observation that two actions are physically equivalent \textsl{if
and only if} they differ by a divergence term \citep{olver2000applications},
because the correspondent equations of motion are the same, and this
is all that matters in the classical regime. This implies that the
physically relevant object is not the action but rather an equivalence
class of actions\footnote{Is important to distinguish between the symmetries of the action and
symmetries of the correspondent equations of motion (which correspond
to extrema of the same action): generalized variational symmetries
form a subgroup of Lie-Baecklund transformations of the equations
of motion.}.

As in the previous cases, this class of symmetries gives rise to local,
point-wise equations. Even more, only in this case, the correspondence
between symmetries and divergence-free quantities like $\text{div}\left(J+\Gamma\right)=E_{L}\cdot\delta\phi$
is one to one, which is the actual statement of the original Noether
theorem.

\subsubsection{Integral symmetries}

An entirely different notion of symmetry arise if we drop the requirement
that a symmetry transformation should leave the family of actions
$\left\{ S_{\Omega'}\right\} $ invariant, and instead require the
invariance of $S_{\Omega}$ only for a single, fixed integral domain
$\Omega$, 
\begin{equation}
S_{\Omega}\left[\phi\right]\to S_{C\left(\Omega\right)}\left[\tilde{\phi}\right]=S_{\Omega}\left[\phi\right].
\end{equation}
This kind of transformations does not lead to point-wise statements,
like conservation laws. Clearly every symmetry of the previous type
is also a symmetry of this type but not the other way around.

It is interesting to observe that tensor models and tensorial GFTs
invoke exactly this type of symmetries in order to define the theory
space, since we require that the corresponding unitary transformations
satisfy 
\begin{equation}
\int_{G}\text{d}h\,U\left(g,h\right)U^{\dagger}\left(h,s\right)=\delta\left(gs^{-1}\right),
\end{equation}
only after the integration over the whole group $G$, and there is
no reason to assume that changing the integral domain to a subspace
of $G$ would preserve the above equality.

\subsubsection{Symmetries in non-local field theories}

So far the definition of a symmetry was introduced for an action which
is given by an integral over a Lagrangian. In the non-local case,
as we have explained, the action is given by a sum of such actions
each defined on a different base manifold. Therefore, we need to extend
the above notions of symmetries to transformations, which are symmetries
(in one of the above senses) of each and all individual functions
in the action-sum.

This is the only generalization we need, to start analyzing symmetries
of the non-local GFT models introduced above.

However, it is important to stress here, that this is not enough to
study generalized symmetries of the Lie-Baecklung type. The same motivation
that lead to considering them in the local case would apply as well
for non-local models. However, contrary to the local case, for non-local
field theories the equivalence class of actions that yields the same
equations of motion is not under control (to our knowledge). The only
thing that we can say is that it does not coincide with the one defined
in the local case, because as we pointed out in \citep{Kegeles:2015oua},
a divergence term will, in general, change the equations of motion.

One approach to overcome this difficulty would be to discuss directly
the symmetries of the corresponding equations of motion, which are
integro-differential equation (see \citep{meleshko2010symmetries}
for the standard approach of Lie algebra methods in integro-differential
equations). However, this is highly non-trivial, in the GFT case.
Gauge invariance condition, the structure of the curved base manifold,
as well as its large dimension make the usual Lie algebra approach,
even more involved.

Also, there is not much more to say about the integral symmetry transformations
defining tensorial group field theories, beside what we remarked already,
i.e that they provide a natural characterization of the corresponding
theory space. For these reasons, we limit our analysis to the Lie
point symmetry analysis and postpone the analysis of Lie-Baecklund
symmetries to future work.

In the following we will use the definition of a symmetry for a non-local
action as follows: 
\begin{defn*}
A symmetry of a non-local action is a transformation that is a Lie-point
symmetries of each functional in the action-sum. 
\end{defn*}

\subsection{Symmetry analysis of gauge variant models}

We start by performing the standard Lie group analysis of point symmetries
\citep{olver2000applications} in the case of gauge-variant GFT models.

In \citep{Kegeles:2015oua} we have shown, that a symmetry condition
of the action can be equivalently formulated on the level of its Lagrangians,
leading to a generalized version of Noether theorem, with respect
to the local case. More precisely, the symmetry relation can be formulated
in the following way 
\begin{thm}
$G$ is a symmetry group of the action iff the generators of the symmetry
$\left(X_{V},X_{M}\right)$ satisfy the relation 
\begin{align}
0= & D_{J}L\cdot DX_{Q}+D_{V}L\cdot X_{Q}+\text{Div}\left(L\,X_{M}\right),\label{eq: Local symmetry condition}
\end{align}
where $X_{Q}=X_{V}-X_{M}\left(\phi\right)$ and it is assumed that
every term is evaluated at some point $z$ of the correspondent base
manifold. 
\end{thm}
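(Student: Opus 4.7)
The plan is to derive the infinitesimal symmetry condition from the integral invariance statement $S_{C(\Omega')}[\tilde\phi]=S_{\Omega'}[\phi]$ and then localize it using the arbitrariness of $\Omega'$. I would work at first order in the group parameter, with a one-parameter subgroup of $G_T$ generated by $(X_V,X_M)$ on the vector bundle $E$, so that the transformed section is given by $\tilde\phi=Q\circ\phi\circ C^{-1}$ as in \ref{eq:trafo of the fields}.

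First I would write the infinitesimal version of the transformed action by splitting the variation into two pieces: the change of the integration domain $\Omega'\mapsto C(\Omega')$, and the change of the integrand $L(x,\phi(x),D\phi(x))\mapsto L(x,\tilde\phi(x),D\tilde\phi(x))$ at fixed $x$. The domain contribution is standard and, via the Lie transport formula, produces a boundary term that in the integrand is $\mathrm{Div}(L\,X_M)$, i.e.\ the divergence of the Lagrangian times the base-manifold generator. The field contribution is computed by passing to the \emph{evolutionary} representative: the vertical characteristic $X_Q=X_V-X_M(\phi)$ measures the variation of $\phi$ at a fixed point, since $\tilde\phi(x)-\phi(x)=X_V-D\phi\cdot X_M+\mathcal O(\varepsilon^2)=X_Q+\mathcal O(\varepsilon^2)$, in agreement with the Taylor expansion already performed in the excerpt.

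Next I would prolong this vertical vector field to the first jet bundle in order to evaluate the variation of $L$ seen as a function on the jet bundle. Because $X_Q$ is purely vertical, the prolongation formula is simply $\mathrm{pr}\,X_Q=X_Q\,\partial_V+DX_Q\,\partial_J$, so that the variation of the integrand at fixed $x$ equals $D_VL\cdot X_Q+D_JL\cdot DX_Q$. Combining with the domain piece yields
\begin{equation}
\delta S_{\Omega'}=\int_{\Omega'}\bigl(D_JL\cdot DX_Q+D_VL\cdot X_Q+\mathrm{Div}(L\,X_M)\bigr).
\end{equation}
Requiring this to vanish for \emph{every} sub-domain $\Omega'\subset\Omega$, which is precisely the point-symmetry requirement emphasized in the discussion preceding the theorem, forces the integrand to vanish pointwise, giving \ref{eq: Local symmetry condition}. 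The converse is immediate by integration.

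The main technical step, and the one I expect to be the principal obstacle, is the careful treatment of the domain variation on a curved base manifold: one has to verify that the Jacobian of $C$ combined with the transport of the volume form produces exactly $\mathrm{Div}(L\,X_M)$ with the correct covariant divergence compatible with the invariant measure (Haar measure in the GFT setting), rather than a coordinate divergence. This is where the geometric formulation from \citep{Kegeles:2015oua} is needed, in order to handle the prolongation on the jet bundle over a Lie group without accidentally smuggling in flat-space simplifications; the rest of the argument is then a clean rewriting of the standard Olver-style Lie point symmetry computation in evolutionary form.
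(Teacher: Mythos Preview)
Your argument is correct and is precisely the standard derivation: split the variation into the transport of the domain and the vertical variation of the section at fixed base point, recognise the latter as the characteristic $X_Q$, prolong to the first jet, and localise using the arbitrariness of $\Omega'$. The point you flag about the covariant divergence with respect to the Haar measure is the only place where care is genuinely needed, and you have identified it correctly.

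Note, however, that the paper does \emph{not} give a proof of this theorem at all: it is quoted from the authors' previous work \citep{Kegeles:2015oua}, and the text following the statement merely unpacks the notation before using the result. So there is no ``paper's own proof'' to compare against here; your derivation is essentially what one would expect that reference to contain, phrased in the Olver evolutionary-form language.
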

The notation that is used in the above equation needs to be further
explained: 
\begin{itemize}
\item $X_{M}$ is a vector field on the base manifold which coefficients
depend on a point of the base manifold and a point on the fiber. In
local coordinates $\left(U,x\right)$ of the base manifold the vector
field can be written as $X_{M}=X_{M}^{i}\left(x,\phi\left(x\right)\right)\,\partial_{i}$. 
\item $X_{V}$ is a vector field along the fiber of the bundle that in local
coordinates $\left(U,x\right)\times\left(V,u\right)$ can be denoted
as $X_{V}^{i}=X_{V}^{i}\left(x,\phi\left(x\right)\right)\,\partial_{u^{i}}$
where $u^{i}=\phi^{i}\left(x\right)$. We will sometimes use the simpler
notation $\partial_{\phi\left(x\right)}$ or even $\partial_{\phi}$,
always referring to $\partial_{u}$. 
\item The assumption of dealing with a geometrical symmetry translates into
the restriction of the coefficients of vector fields depending only
on $x$ and $\phi\left(x\right)$, but not on $\partial\phi$ and
higher order derivatives. 
\item $X_{M}\left(\phi\right)$ is the Lie derivative of $\phi$ along $X_{M}$. 
\item $X_{Q}$ is the characteristic vector field, which corresponds to
the effective transformation of the fields from equation \ref{eq:trafo of the fields},
given by 
\begin{equation}
X_{Q}=\partial_{\epsilon}\vert_{0}\,Q_{\epsilon}\circ\phi\circ C_{\epsilon}^{-1}=X_{V}-X_{M}\left(\phi\right)
\end{equation}
\end{itemize}
It is also important to spend few words on the different types of
derivatives that are used in this geometrical construction. 
\begin{itemize}
\item The derivative $D_{V}$ denotes a derivative of the Lagrangian along
the coordinates of the fiber. In the common notation we can write
$\partial_{\phi}$ or $\delta_{\phi}$ 
\item $D_{J}$ denotes the derivative of the Lagrangian with respect to
the jet coordinates. In the above notation we can write $\partial_{\partial_{i}\phi}$
or $\delta_{\partial_{i}\phi}$. 
\item The derivative $D$ refers to the total derivative with respect to
the base manifold. This means that the implicit dependence on the
base manifold through fields needs to be taken into account. 
\item The partial derivative $\partial_{i}$ is instead a derivative purely
on the explicit dependence of the coordinates. Using above notation
we can write 
\begin{equation}
D\,f\left(x,\phi\left(x\right)\right)=\partial_{x}f+D_{V}\,f\cdot\partial_{x}\phi.
\end{equation}
We also use the capital letter in $\text{Div}\left(L\cdot X_{M}\right)$
for the total derivatives used in the divergence and $\text{div}\left(X_{M}\right)$
to denote the divergence taken only with respect to the explicit coordinates. 
\end{itemize}
Equation \ref{eq: Local symmetry condition} holds for local as well
as non-local Lagrangians. By partial integration equation \ref{eq: Local symmetry condition}
becomes 
\begin{equation}
E_{L}\left[X_{Q}\right]+\text{Div}\left(D_{J}L\cdot X_{Q}+L\cdot X_{M}\right)=0.
\end{equation}
Where $E_{L}$ is the Euler operator acting on the Lagrangian $L$\footnote{For local theories $E_{L}$ coincides with the equations of motion
and the above equation becomes the usual Noether identity. For non-local
theories, however, $E_{L}$ does not coincide with the equations of
motion, due to their integro-differential structure. In this case
further work needs to be done to provide a connection between the
equations of motion and the divergence terms. The resulting relation
is shown in the next section of this paper and is carefully derived
in \citep{Kegeles:2015oua}.}.

Having clarified the terminology and the notation, we can use \ref{eq: Local symmetry condition}
to derive the most general geometric symmetries of the various GFT
models.

\noindent We will use a rather standard procedure, based on the following
steps:

i) We assume a most general vector field on the vector bundle and
insert it in \ref{eq: Local symmetry condition}, ii) we rearrange
the resulting equation by different powers in derivatives of fields.
Since the coefficients $X_{M}^{i}$ and $X_{V}^{i}$ do not depend
on derivatives of the fields, it is possible to extract all powers
explicitly, iii) different powers of derivatives of $\phi$ are linearly
independent since the condition \ref{eq: Local symmetry condition}
has to be satisfied for all fields. For this reason the coefficients
in front of each term have to vanish separately. This results in simple
differential equations for the coefficients of the vector field which
can then be easily solved.

Since the GFT models of interest, here, are defined on many copies
of $SU\left(2\right)$ the notation can quickly become unreadable.
For this reason we summarize the notation used in the rest of this
section in the table \ref{tab:Usage-of-indices}.

\begin{table}[h]
\renewcommand*{\arraystretch}{2}%
\begin{tabular}{ccl}
\multirow{2}{*}{$u^{c}$  } &  & Field value $\phi^{c}\left(\vec{g}\right)$\tabularnewline
\cline{2-3} 
 & $c$  & - color of the field,\tabularnewline
\hline 
\multirow{4}{*}{$u_{iA}^{c}$ } &  & Derivative of the field $\phi^{c}$ at the point $\vec{g}$\tabularnewline
\cline{2-3} 
 & $i$  & - chart component of the single copy of $SU\left(2\right)$\tabularnewline
\cline{2-3} 
 & $A$  & - number of the copy of $SU\left(2\right)$\tabularnewline
\cline{2-3} 
 & $iA$  & direction of the derivative $\partial_{iA}\phi\vert_{\vec{g}}$ \tabularnewline
\hline 
\multirow{3}{*}{$X_{M}^{iA}$} &  & Vector field that acts on the base manifold $M$\tabularnewline
\cline{2-3} 
 & $i$  & - chart component of the single copy of $SU\left(2\right)$\tabularnewline
\cline{2-3} 
 & $A$  & - number of the copy of $SU\left(2\right)$\tabularnewline
\hline 
\multirow{2}{*}{$X_{u^{c}}$} &  & Component of $X_{V}$ in $\partial_{u^{c}}$ direction\tabularnewline
\cline{2-3} 
 & $c$  & - color of the transformed field\tabularnewline
\hline 
\end{tabular}

\caption{Usage of indices in this section\label{tab:Usage-of-indices}}
\end{table}

We denote the vector fields by $X_{M}$ and $X_{V}$ and refer to
their coefficients in a specific chart by $X_{M}^{iA}$ and $X_{u^{c}}$
respectively, i.e. 
\begin{align}
X_{M} & =X_{M}^{iA}\,\partial_{iA} & X_{V} & =X_{u^{c}}\,\partial_{u^{c}}+X_{\bar{u}^{c}}\,\partial_{\bar{u}^{c}}.
\end{align}
For the rest of this section we assume the summation convention over
repeated indices.

The local part of the action is given by 
\begin{equation}
L\left(u^{c},u_{J}^{c}\right)=\sum_{iA}\kappa\bar{u}_{iA}^{c}u_{iA}^{c}+m\,\bar{u}^{c}u^{c},\label{eq:local lagrangian}
\end{equation}
where the sum over $A$ ranges in $\left\{ 1,\cdots,n\right\} $ (the
$SU\left(2\right)$ copies of the local base manifold) 
\begin{widetext}
The above symmetry condition equation \ref{eq: Local symmetry condition}
implies 
\begin{equation}
X_{M}\left(L\right)+L\,\text{Div}\left(X_{M}\right)+2\kappa\,\partial_{iA}\phi^{c}\cdot D^{iA}\left(X_{cQ}\right)+2m\phi^{c}\,\left(X_{cQ}\right)=0.
\end{equation}
Explicitly sorting the terms by powers of $u_{iA}^{c}$ we get

\begin{align}
0= & \left[m\left|u^{c}\right|^{2}\text{div}\left(X_{M}\right)+m\bar{u}^{c}X_{u^{c}}+mu^{c}X_{\bar{u}^{c}}\right]\label{eq:1-1}\\
+\Re\left[u_{nA}^{t}\right] & \left[m\left|u^{c}\right|^{2}\left(D_{\bar{u}^{t}}X_{M}^{nA}+D_{u^{t}}X_{M}^{nA}\right)+\kappa g_{A}^{nm}\left(\partial_{mA}X_{u^{t}}+\partial_{mA}X_{\bar{u}^{t}}\right)\right]\label{eq:2-1}\\
+\imath\Im\left[u_{nA}^{t}\right] & \left[m\left|u^{c}\right|^{2}\left(D_{u^{t}}X_{M}^{nA}-D_{\bar{u}^{t}}X_{M}^{nA}\right)+\kappa g_{A}^{nm}\left(\partial_{mA}X_{\bar{u}^{t}}-\partial_{mA}X_{u^{t}}\right)\right]\label{eq:3-1}\\
+\kappa\Re\left[\bar{u}_{nA}^{c}u_{mA}^{c}\right] & \left[\frac{1}{2}X_{M}^{iB}\partial_{iB}g_{A}^{nm}-2g_{A}^{ni}\,\partial_{iA}X_{M}^{mA}+g_{A}^{nm}\left\{ \left(D_{u^{c}}X_{u^{c}}+D_{\bar{u}^{c}}X_{\bar{u}^{c}}\right)+\text{div}\left(X_{M}\right)\right\} \right]\label{eq:4}\\
-2\kappa\Re\left[\bar{u}_{nA}^{c}u_{mB\neq A}^{c}\right] & \left[g_{A}^{ni}\,\partial_{iA}X_{M}^{mB\neq A}\right]\label{eq:5}\\
+\kappa g_{A}^{nm}\Re\left[\bar{u}_{nA}^{c}u_{mA}^{t\neq c}\right] & \left[D_{u^{t\neq c}}X_{cV}+D_{\bar{u}^{c}}X_{\bar{u}^{t\neq cV}}\right]\label{eq:6}\\
+\imath\kappa g_{A}^{nm}\,\Im\left[\bar{u}_{nA}^{c}u_{mA}^{t\neq c}\right] & \left[D_{u^{t\neq c}}X_{u^{c}}-D_{\bar{u}^{c}}X_{\bar{u}^{t\neq c}}\right]\label{eq:6.7}\\
+\kappa g_{A}^{nm}\Re\left[\bar{u}_{nA}^{c}\bar{u}_{mA}^{t}\right] & \left[D_{\bar{u}^{t}}X_{c}+D_{u^{t}}X_{\bar{u}^{c}}\right]\label{eq:7}\\
+\imath\kappa g_{A}^{nm}\,\Im\left[\bar{u}_{nA}^{c}\bar{u}_{mA}^{t}\right] & \left[D_{\bar{u}^{t}}X_{u^{c}}-D_{u^{t}}X_{\bar{u}^{c}}^{\dagger}\right]\label{eq:8}\\
+2\kappa\bar{u}_{nA}^{c}u_{iA}^{c}\,\Re\left[u_{mA}^{t}\right] & \left[-2g_{A}^{nm}\left(D_{u^{t}}X_{M}^{iA}+D_{\bar{u}^{t}}X_{M}^{iA}\right)+g_{A}^{ni}\left(D_{u^{t}}X_{M}^{mA}+D_{\bar{u}^{t}}X_{M}^{mA}\right)\right]\label{eq:9}\\
+\imath2\kappa\bar{u}_{nA}^{c}u_{iA}^{c}\,\Im\left[\bar{u}_{mA}^{t}\right] & \left[-2g_{A}^{nm}\left(D_{u^{t}}X_{M}^{iA}-D_{\bar{u}^{t}}X_{M}^{iA}\right)+g_{A}^{ni}\left(D_{u^{t}}X_{M}^{mA}-D_{\bar{u}^{t}}X_{M}^{mA}\right)\right]\label{eq:10}\\
+\bar{u}_{nA}^{c}u_{iB\neq A}^{c}\,\Re\left[u_{mA}^{t}\right] & \left[-2\kappa g_{A}^{nm}\left(D_{u^{t}}X_{M}^{iB\neq A}+D_{\bar{u}^{t}}X_{M}^{iB\neq A}\right)\right]\label{eq:11}\\
+\imath\bar{u}_{nA}^{c}u_{iB\neq A}^{c}\,\Im\left[\bar{u}_{mA}^{t}\right] & \left[-2\kappa g_{A}^{nm}\left(D_{u^{t}}X_{M}^{iB\neq A}-D_{\bar{u}^{t}}^{\dagger}X_{M}^{iB\neq A}\right)\right].\label{eq:12}
\end{align}
\end{widetext}

This equation has to hold true for arbitrary fields $u^{c}$ and $u_{iA}^{c}$.
However, the parts in brackets do not depend on $u_{iA}^{c}$, which
implies that each line has to vanish individually\footnote{ Notice that, if we allowed for derivative dependence of the coefficients
$\chi=\chi\left(x,\phi\left(x\right),D\phi\vert_{x}\right)$ and similar
for the $\xi$, we could not argue that the terms with different powers
of derivatives of $\phi$ have to vanish independently, since the
terms in brackets would also contain derivatives of the fields.}. The consequences of these equations read: 
\begin{enumerate}
\item Equations \ref{eq:6} and \ref{eq:6.7} imply that the vector field
component $X_{u^{c}}$ depend only on the field colors they transform,
that is (no summation) 
\begin{align*}
X_{u^{c}} & =X_{u^{c}}\left(\vec{g},u^{c},\bar{u}^{c}\right) & X_{\bar{u}^{c}} & =X_{\bar{u}^{c}}\left(\vec{g},u^{c},\bar{u}^{c}\right).
\end{align*}
\item Equations \ref{eq:7} and \ref{eq:8} additionally imply that the
vector fields $X_{u^{c}}$ do not depend on the complex conjugate
of the field, that is 
\begin{align*}
X_{u^{c}} & =X_{u^{c}}\left(\vec{g},u^{c}\right) & X_{\bar{u}^{c}} & =X_{\bar{u}^{c}}\left(\vec{g},\bar{u}^{c}\right).
\end{align*}
\item Equations \ref{eq:11} and \ref{eq:12} tell us that the vector fields
that transform the base manifold do not depend on the field values
$u^{c}$ i.e. $X_{M}^{A}=X_{M}^{A}\left(\vec{g}\right)$. From this
condition, equations \ref{eq:9} and \ref{eq:10} are automatically
satisfied. 
\item Due to the above, equations \ref{eq:2-1} and \ref{eq:3-1} reduce
to 
\begin{equation}
\partial_{mA}X_{u^{t}}=0=\partial_{mA}X_{\bar{u}^{c}}.
\end{equation}
That is, the vector fields do not explicitly depend on the points
in the base manifold 
\begin{align*}
X_{cV} & =X_{u^{c}}\left(u^{c}\right) & X_{\bar{u}^{c}} & =X_{\bar{u}^{c}}\left(\bar{u}^{c}\right).
\end{align*}
\item Equation \ref{eq:1-1}, together with the above conclusion, restricts
the vector fields to a specific form 
\begin{align}
X_{u^{c}} & =C\,u^{c} & X_{\bar{u}^{c}} & =\bar{C}\,\bar{u}^{c},
\end{align}
where $C$ is an arbitrary constant that satisfies 
\begin{equation}
\text{div}\left(X_{M}\right)=-C-\bar{C}.\label{eq:u1 symmetry condition}
\end{equation}
\item The above condition reduces equations \ref{eq:4} and \ref{eq:5}
to 
\begin{eqnarray*}
X_{M}^{iB}\partial_{iB}g_{A}^{nm}-2\,g_{A}^{ni}\,\partial_{iA}X_{M}^{mA}-2\,g_{A}^{mi}\,\partial_{iA}X_{M}^{nA} & = & 0\\
g_{A}^{ni}\,\partial_{iA}X_{M}^{mB\neq A}+g_{B}^{ni}\,\partial_{iA}X_{M}^{mA\neq B} & = & 0.
\end{eqnarray*}
\end{enumerate}
These two equations are the only ones that are not trivial to solve.
However, although lengthy, their solution can be found in a straightforward
way. The solution in Hopf coordinates $\left(\eta,\xi,\chi\right)$\footnote{In this coordinates the metric on $SU\left(2\right)$ is given by
$g=\text{d}\eta^{2}+\sin^{2}\eta\,\text{d}\xi^{2}+\cos^{2}\eta\,\text{d}\chi^{2}$.} reads as

\begin{eqnarray}
X_{M}^{\eta A} & = & C_{1}\sin\xi_{A}\sin\chi_{A}\label{eq:sVF1}\\
 &  & +C_{2}\cos\xi_{A}\sin\chi_{A}\nonumber \\
 &  & +C_{3}\sin\xi_{A}\cos\chi_{A}\nonumber \\
 &  & +C_{4}\cos\xi_{A}\cos\chi_{A}\nonumber \\
X_{M}^{\xi A} & = & \frac{\cos\eta_{A}}{\sin\eta_{A}}\partial_{\xi A}X_{M}^{\eta A}+C_{5}\label{eq:sVF2}\\
X_{M}^{\chi A} & = & -\frac{\sin\eta_{A}}{\cos\eta_{A}}\partial_{\chi A}X_{M}^{\eta A}+C_{6},\label{eq:sVF3}
\end{eqnarray}
where $C_{i}$'s are arbitrary constants.

Setting subsequently $C_{i}$ to one and the rest of the coefficients
to zero we obtain, for each copy of the group $A$, six linearly independent
vector fields given by 
\begin{eqnarray}
v_{1} & = & \left(\begin{array}{c}
\sin\left(\xi\right)\sin\left(\chi\right)\\
\cot\left(\eta\right)\sin\left(\xi\right)\cos\left(\chi\right)\\
-\tan\left(\eta\right)\sin\left(\xi\right)\cos\left(\chi\right)
\end{array}\right)\\
v_{2} & = & \left(\begin{array}{c}
\cos\left(\xi\right)\sin\left(\chi\right)\\
-\cot\left(\eta\right)\sin\left(\xi\right)\sin\left(\chi\right)\\
-\tan\left(\eta\right)\cos\left(\xi\right)\cos\left(\chi\right)
\end{array}\right)\\
v_{3} & = & \left(\begin{array}{c}
\sin\left(\xi\right)\cos\left(\chi\right)\\
\cot\left(\eta\right)\cos\left(\xi\right)\cos\left(\chi\right)\\
\tan\left(\eta\right)\sin\left(\xi\right)\sin\left(\chi\right)
\end{array}\right)\\
v_{4} & = & \left(\begin{array}{c}
\cos\left(\xi\right)\cos\left(\chi\right)\\
-\cot\left(\eta\right)\sin\left(\xi\right)\cos\left(\chi\right)\\
\tan\left(\eta\right)\cos\left(\xi\right)\sin\left(\chi\right)
\end{array}\right),
\end{eqnarray}
and 
\begin{align}
v_{5} & =\left(\begin{array}{c}
0\\
1\\
0
\end{array}\right) & v_{6} & =\left(\begin{array}{c}
0\\
0\\
1
\end{array}\right).
\end{align}
It is a direct calculation to check that these vector fields are divergence
free, $\text{div}\left(V_{i}\right)=0$. This fact, together with
equation \ref{eq:u1 symmetry condition}, implies 
\begin{align}
X_{u^{k}} & =\imath C_{k}\,u^{k} & X_{\bar{u}^{k}} & =-\imath C_{k}\,\bar{u}^{k}\quad,
\end{align}
which generates the usual $U\left(1\right)$ symmetry of fields for
each color.

In order to find the symmetry group generated by the fields $v_{1},\cdots,v_{6}$
, we look at their algebra. The six dimensional Lie algebra of $v_{1},\cdots,v_{6}$
is given in table \ref{tab:Lie-algebra-of}

\begin{table}
\renewcommand*{\arraystretch}{2}%
\begin{tabular}{>{\raggedleft}p{1cm}|>{\raggedleft}p{1cm}>{\raggedleft}p{1cm}>{\raggedleft}p{1cm}>{\raggedleft}p{1cm}>{\raggedleft}p{1cm}>{\raggedleft}p{1cm}}
 & $v_{1}$  & $v_{2}$  & $v_{3}$  & $v_{4}$  & $v_{5}$  & $v_{6}$\tabularnewline
\hline 
$v_{1}$  & 0  & $v_{5}$  & $v_{6}$  & 0  & $-v_{2}$  & $-v_{3}$\tabularnewline
$v_{2}$  & $-v_{5}$  & 0  & 0  & $v_{6}$  & $v_{1}$  & $-v_{4}$\tabularnewline
$v_{3}$  & $-v_{6}$  & 0  & 0  & $v_{5}$  & $-v_{4}$  & $v_{1}$\tabularnewline
$v_{4}$  & 0  & $-v_{6}$  & $-v_{5}$  & 0  & $v_{3}$  & $v_{2}$\tabularnewline
$v_{5}$  & $v_{2}$  & $-v_{1}$  & $v_{4}$  & $-v_{3}$  & 0  & 0\tabularnewline
$v_{6}$  & $v_{3}$  & $v_{4}$  & $-v_{1}$  & $-v_{2}$  & 0  & 0\tabularnewline
\end{tabular}

\caption{\label{tab:Lie-algebra-of}Lie algebra of symmetry vector fields}
\end{table}

We can split this algebra into $\mathfrak{su}\left(2\right)\times\mathfrak{su}\left(2\right)$
by taking the following linear combinations 
\begin{align}
l_{1} & =\frac{v_{5}-v_{6}}{2} & r_{1} & =\frac{v_{5}+v_{6}}{2}\nonumber \\
l_{2} & =\frac{v_{3}-v_{2}}{\sqrt{2}} & r_{2} & =\frac{v_{3}+v_{2}}{\sqrt{2}}\label{eq:left}\\
l_{3} & =\frac{v_{4}+v_{1}}{\sqrt{2}} & r_{3} & =\frac{v_{4}-v_{1}}{\sqrt{2}}\quad.\nonumber 
\end{align}
The commutators for $l_{i}$ and $r_{i}$ become 
\begin{align}
\left[l_{1},l_{2}\right] & =l_{3} & \left[r_{1},r_{2}\right] & =r_{3}\\
\left[l_{1},l_{3}\right] & =-l_{2} & \left[r_{1},r_{3}\right] & =-r_{2}\\
\left[l_{2},l_{3}\right] & =2l_{1} & \left[r_{2},r_{3}\right] & =2r_{1}
\end{align}
\begin{equation}
\left[l_{i},r_{j}\right]=0\quad\forall i,j\in\left\{ 1,2,3\right\} 
\end{equation}
A closer inspection shows that $l_{i}$ and $r_{i}$ form a set of
left and right invariant vector fields on $SU\left(2\right)$, respectively
\citep{Akhtarshenas2010aaa}.

Since the above algebra was derived for each copy of the group $A$,
the whole symmetry group of the local part of the action becomes 
\begin{equation}
\left[SU\left(2\right)\times SU\left(2\right)\right]^{\times3}\times U\left(1\right)^{\times Nc},
\end{equation}
acting on the base manifold by left and right multiplication as 
\begin{equation}
L_{\vec{\eta}}\circ R_{\vec{\mu}}\left(\vec{g}\right)=\left(\vphantom{\vec{L}}\eta_{1}g_{1}\mu_{1},\,\eta_{2}g_{2}\mu_{2},\,\eta_{3}g_{3}\mu_{3}\right),
\end{equation}
for any $\vec{\eta},\vec{\mu}\in SU\left(2\right)^{\times3}$ and
on fields by multiplication with a $U\left(1\right)$ phase.

It is now trivial to insert this transformations in the interaction
part of the action in order to verify which of the transformations
remains a symmetry. It is easy to see that the symmetry group is preserved
for tensorial interactions.

Indeed, this is a remarkable feature of tensorial GFTs, which can
be also stated as follows: by their very definition, the symmetry
group of tensorial interactions is the same as that of the local part
of the action. In this sense, this is a confirmation of the very motivation
for introducing tensorial interactions as encoding the correct new
notion of \textsl{locality} for tensorial field theories \citep{Rivasseau:2011hm}.

This is due to the fact that both of the symmetry groups we have found
above form particular cases of the unitary transformations characterizing
tensorial interactions, as the $U\left(1\right)$ transformations
are implemented by 
\begin{equation}
U\left(g,h\right)=\delta\left(gh^{-1}\right)e^{\imath\theta},
\end{equation}
and the left (right) multiplication by the group is obtained for fixed
$\eta$ or $\mu$ as 
\begin{align}
U\left(g,h\right) & =\delta\left(\eta gh^{-1}\right) & U\left(g,h\right) & =\delta\left(g\mu h^{-1}\right).
\end{align}

In the case of simplicial models the status of both the $SU\left(2\right)$
and the $U\left(1\right)$ group as symmetries of the full theory
depend on the specific interaction in question. We need, then, to
check explicitly the condition on the group action \ref{eq:condition on the group action}.
We postpone the verification of this condition to the next section,
since it will be the main tool in the analysis of gauge invariant
models.

\subsection{Gauge invariant models}

In this section we study more in detail the symmetry group of simplicial
GFT interactions, and we show how the treatment can be significantly
simplified in the presence of gauge invariance. Contrary to the previous
case, we will use the interaction part to classify the symmetry group,
subsequently checking which of the symmetries represent also a symmetry
of the local action. The first part of the treatment is independent
of the local part of the action and holds for a large number of base
group manifolds, which is why we do not specify the group at the beginning
of the section. However, in order to verify the symmetry group for
the local part we need to know the exact structure of the differential
operator involved and so we need to specify the underlying group as
well. From this point onwards, we specialize the notation to the $n=3$
case for simplicity of exposition. The extension to generic $n$ is
straightforward.

\subsubsection{Admissible base manifold transformations}

The combinatorics of the interaction part is encoded in the function
$f$ from equations \ref{eq:trafo of the sections} and \ref{eq:Relation between C's}.
For example the combinatorial structure of a 3d simplicial interaction
is given by 
\begin{gather*}
f:\left(g_{1},\cdots,g_{6}\right)\\
\mapsto\left(g_{1},g_{2},g_{3}\right)\left(g_{3},g_{4},g_{5}\right)\left(g_{5},g_{2},g_{6}\right)\left(g_{6},g_{4},g_{1}\right).
\end{gather*}
Admissible transformations of the base manifold are given by those
functions $C:G^{\times3}\to G^{\times3}$ that satisfy the relation
\ref{eq:Relation between C's}. Therefore, for any $\left(g_{1},\cdots,g_{6}\right)\in G^{\times6}$,
there should exist a point $\left(\tilde{g}_{1},\cdots\tilde{g}_{6}\right)\in G^{\times6}$
such that 
\begin{equation}
C^{\times4}\circ f\left(g_{1},\cdots,g_{6}\right)=f\left(\tilde{g}_{1},\cdots,\tilde{g}_{6}\right).\label{eq:Condition for Boulatov}
\end{equation}
Writing $C$ in components as 
\begin{equation}
C\left(\vec{g}\right)=\left(C^{1}\left(\vec{g}\right),C^{2}\left(\vec{g}\right),C^{1}\left(\vec{g}\right)\right),
\end{equation}
condition \ref{eq:Condition for Boulatov} implies 
\begin{align}
C^{1}\left(g_{1},g_{2},g_{3}\right) & =C^{3}\left(g_{6},g_{4},g_{1}\right)\\
C^{2}\left(g_{1},g_{2},g_{3}\right) & =C^{2}\left(g_{5},g_{2},g_{6}\right),
\end{align}
which suggests the following decomposition of $C$, 
\begin{equation}
C\left(g_{1},g_{2},g_{3}\right)=C^{1}\left(g_{1}\right)C^{2}\left(g_{2}\right)C^{3}\left(g_{3}\right).
\end{equation}
Notice that in this case the diffeomorphism properties of $C$ carry
over to the components $C^{i}$.

According to equation \ref{eq:trafo of the sections}, the fields
transform under $C$ as 
\begin{equation}
\phi\mapsto\tilde{\phi}=\phi\circ C^{-1}.
\end{equation}
The field $\tilde{\phi}$ needs to be gauge invariant as well, otherwise
the transformation $C$ would leave the allowed space of fields. The
gauge invariance of $\tilde{\phi}$ reads 
\begin{equation}
\tilde{\phi}\circ R_{h}=\phi\circ C\circ R_{h}\overset{!}{=}\phi\circ C=\tilde{\phi}.
\end{equation}
Since this has to be true for all gauge invariant fields $\phi$,
the point $C\circ R_{h}\left(\vec{g}\right)$ needs to be in the same
orbit (under the multiplication from the right by the diagonal group)
as the point $C\left(\vec{g}\right)$. This means that, for any $h\in G_{D}$,
there should exist an $\tilde{h}\in G_{D}$ such that 
\begin{equation}
C\circ R_{h}=R_{\tilde{h}}\circ C,
\end{equation}
or point-wise 
\begin{equation}
C\left(\vec{g}h\right)=C\left(\vec{g}\right)\tilde{h}.\label{eq:gauge invariance condition}
\end{equation}
As we show in the appendix \ref{sec:Reduction-of-transformations},
this restricts the $C$, up to discrete transformations, to be of
the form 
\begin{equation}
C\left(\vec{g}\right)=\vec{L}\cdot h^{-1}\,\vec{g}\,h,
\end{equation}
for some $L\in G^{\times2}$ and $h\in G_{D}$.

In the end, the symmetry group of the interaction part becomes 
\begin{equation}
G^{\times2}\times G_{D}.
\end{equation}
It is evident that this group already forms a symmetry group, due
to the left invariance of the Haar measure. We can summarize the role
of combinatorial structure and the gauge invariance on the transformation
group of the base manifolds as follows 
\begin{widetext}
\begin{equation}
\text{Diff}\left(\left[G^{^{\times3}}\right]\right)\underset{\text{combinatorics}}{\longrightarrow}\text{Diff}\left(G\right)^{\times2}\underset{\text{gauge invariance}}{\longrightarrow}G^{\times2}\times G_{3D}.
\end{equation}
\end{widetext}

\begin{table*}[t]
\begin{centering}
\renewcommand*{\arraystretch}{2}%
\begin{tabular}{|>{\centering}m{6cm}|>{\centering}m{4.5cm}|>{\centering}m{6.5cm}|}
\hline 
\textbf{Model}  & \textbf{Symmetry Group}  & \textbf{Action}\tabularnewline
\hline 
\hline 
$\phi_{1,2,3}^{1}\phi_{1,4,5}^{2}\phi_{6,2,5}^{3}\phi_{6,4,3}^{4}$  & $G^{\times3}\times U\left(1\right)^{\times3}$  & $\begin{array}{ll}
\vec{g} & \mapsto L_{C}\left(\vec{g}\right)\\
\phi^{c} & \mapsto e^{\imath\theta_{c}}\phi^{c}\quad\sum_{c}\theta^{c}=0
\end{array}$

with $C=\left(c^{1},c^{2},c^{3}\right)$\tabularnewline
\hline 
$\phi_{1,2,3}\,\phi_{3,4,5}\,\phi_{5,2,6}\,\phi_{6,4,1}$  & $G^{\times2}$  & $\begin{array}{ll}
\vec{g} & \mapsto L_{C}\left(\vec{g}\right)\end{array}$

with $C=\left(c^{1},c^{2},c^{1}\right)$\tabularnewline
\hline 
$\phi_{1,2,3}^{P}\,\phi_{1,4,5}^{P}\,\phi_{2,5,6}^{P}\,\phi_{3,6,4}^{P}$  & $G$  & $\begin{array}{ll}
\vec{g} & \mapsto L_{C}\left(\vec{g}\right)\end{array}$

with $C=\left(c,c,c\right)$\tabularnewline
\hline 
$\phi_{1,2,3}\,\bar{\phi}_{3,4,5}\,\phi_{5,2,6}\,\bar{\phi}_{6,4,1}$  & $G^{\times2}\times U\left(1\right)$  & $\begin{array}{ll}
\vec{g} & \mapsto L_{C}\left(\vec{g}\right)\\
\phi & \mapsto e^{\imath\theta}\phi
\end{array}$

with $C=\left(c^{1},c^{2},c^{1}\right)$\tabularnewline
\hline 
$\phi_{1,2,3,4}\,\phi_{4,5,6,7}\,\phi_{7,3,8,9}\,\phi_{9,6,2,10}\,\phi_{10,8,5,1}$  & $SU\left(2\right)^{\times2}$  & $\begin{array}{ll}
\vec{g} & \mapsto L_{C}\left(\vec{g}\right)\end{array}$

with $C=\left(c^{1},c^{2},c^{2},c^{1}\right)$\tabularnewline
\hline 
Barrett-Crane  & $Spin\left(4\right)^{\times2}\times SU\left(2\right)$  & $\begin{array}{ll}
\vec{g} & \mapsto L_{S}\left(\vec{g}\right)\\
\vec{g} & \mapsto c\cdot\left(\vec{g};k\right)\cdot c^{-1}
\end{array}$

with $S=\left(s^{1},s^{2},s^{2},s^{1}\right)$\tabularnewline
\hline 
\end{tabular}
\par\end{centering}
\centering{}\caption{Models and their symmetry groups excluding gauge symmetries of the
fields.\label{tab:Models-and-their symmetries} Hereby, $c^{i}\in SU\left(2\right)$
and $s^{i}\in Spin\left(4\right)$ and $L_{C}$ denotes the left multiplication
by $C$.}
\end{table*}
For higher dimensional models such as the Ooguri model with the interaction
given by 
\begin{gather*}
f:\left(g_{1},\cdots,g_{10}\right)\mapsto\\
\left(g_{1},g_{2},g_{3},g_{4}\right)\left(g_{4},g_{5},g_{6},g_{7}\right)\left(g_{7},g_{3},g_{8},g_{9}\right)\times\\
\times\left(g_{9},g_{6},g_{2},g_{10}\right)\left(g_{10},g_{8},g_{5},g_{1}\right),
\end{gather*}
we observe that the above treatment still results in the transformation
group 
\begin{equation}
G^{2}\times G_{4D},\label{eq:Ooguri model}
\end{equation}
where $G^{2}$ acts by left multiplication as 
\begin{equation}
\left(G_{1},G_{2},G_{2},G_{1}\right)\vec{g}=\left(G_{1}g^{1},G_{2}g^{2},G_{2}g^{3},G_{1}g^{4}\right).
\end{equation}

\subsubsection*{Note on differences between simplicial combinatorics}

As we mentioned above, the combinatorial structure for simplicial
models can vary. This variation is captured by different functions
$f$, as used in the beginning of this section.

For the original Boulatov interaction we get 
\begin{gather*}
f:\left(g_{1},\cdots,g_{6}\right)\\
\mapsto\left(g_{1},g_{2},g_{3}\right)\left(g_{1},g_{4},g_{5}\right)\left(g_{2},g_{5},g_{6}\right)\left(g_{3},g_{6},g_{4}\right).
\end{gather*}
The resulting transformations become 
\begin{equation}
C\left(\vec{g}\right)=C^{1}\left(g_{1}\right)C^{1}\left(g_{2}\right)C^{1}\left(g_{3}\right),
\end{equation}
where all the components are the same. It is easy to check that this
transformation also respect the cyclic permutation condition and therefore
we get the symmetry group of the Boulatov model as 
\begin{equation}
G\times G.
\end{equation}

In the colored case, instead, we get 
\begin{gather*}
f:\left(g_{1},\cdots,g_{6}\right)\\
\mapsto\left(g_{1},g_{2},g_{3}\right)\left(g_{1},g_{4},g_{5}\right)\left(g_{6},g_{2},g_{5}\right)\left(g_{6},g_{4},g_{3}\right),
\end{gather*}
and the resulting admissible transformations are 
\begin{equation}
C\left(\vec{g}\right)=C^{1}\left(g_{1}\right)C^{2}\left(g_{2}\right)C^{3}\left(g_{3}\right).
\end{equation}
The group of admissible transformations is therefore 
\begin{equation}
G^{\times3}\times G.
\end{equation}

\subsubsection{Symmetries of gauge invariant models}

Following the procedure of the previous section, the infinitesimal
symmetry condition for the simplicial interaction in three dimensions
takes the form 
\begin{equation}
0=D_{V}L\cdot X_{Q}+\text{Div}\left(L\,X_{M}\right),
\end{equation}
Writing the same condition in terms of $u^{i}$, and using the fact
that the only admissible base manifold transformations are generated
by divergence free vector fields, we obtain 
\begin{align}
u^{2}u^{3}u^{4}\,X_{V}^{1}+u^{1}u^{3}u^{4}\,X_{V}^{2}\nonumber \\
+u^{1}u^{2}u^{4}\,X_{V}^{3}+u^{1}u^{2}u^{3}\,X_{V}^{4}\label{eq:Non-local term}\\
+c.c. & =0.\nonumber 
\end{align}
Hereby $X^{i}$ is evaluated at the point $\left(\vec{g}_{i},\vec{u}\right)$
with $\vec{g}_{1}=\left(g_{1},g_{2},g_{3}\right)$, $\vec{g}_{2}=\left(g_{3},g_{4},g_{5}\right)$,
$\vec{g}_{3}=\left(g_{5},g_{2},g_{6}\right)$ and $\vec{g}_{4}=\left(g_{6},g_{4},g_{1}\right)$
and $\vec{u}=\left\{ \phi^{i}\left(\vec{g}\right)\right\} _{i\in\left\{ 1,\cdots,4\right\} }$.
Notice that $\vec{u}$ is not $\left(u^{1},u^{2},u^{3},u^{4}\right)$
since the latter tuple is given by $\left\{ \phi^{i}\left(\vec{g}_{i}\right)\right\} _{i\in\left\{ 1,\cdots,4\right\} }$.
Equation \ref{eq:Non-local term} needs to hold true for any $\phi^{i}$
and any point of the base manifold.

Inserting the formal power series expansion of $X_{V}^{i}$ 
\[
X_{V}^{i}\left(\vec{g},\vec{u}\right)=\sum_{\vec{m}}\Theta_{\vec{m}}^{i}\left(\vec{g}\right)\phi^{1}\left(\vec{g}\right)^{m_{1}}\cdots\phi^{4}\left(\vec{g}\right)^{m_{4}}\bar{\phi}^{1}\left(\vec{g}\right)\cdots\bar{\phi}^{4}\left(\vec{g}\right),
\]
in equation \ref{eq:Non-local term} we observe that all the coefficient
functions $\Theta^{i}$ vanish except for one, such that 
\begin{equation}
X_{V}^{i}\left(\vec{g},\vec{u}\right)=\Theta^{i}\left(\vec{g}\right)u^{i}.
\end{equation}
Equation \ref{eq:Non-local term} becomes 
\begin{equation}
\Theta^{1}\left(\vec{g}_{1}\right)+\Theta^{2}\left(\vec{g}_{2}\right)+\Theta^{3}\left(\vec{g}_{3}\right)+\Theta^{4}\left(\vec{g}_{4}\right)=0.
\end{equation}
As we show in the appendix \ref{sec:Constancy-of-the}, the only functions
that are gauge invariant and satisfy the above equation are constants,
\begin{align}
\theta^{i} & =const. & \sum_{i}\theta^{i} & =0\quad.\label{eq:condition on the phase}
\end{align}
Therefore $X_{V}^{i}$ generate the symmetry group $U\left(1\right)^{\#c-1}$,
where $\#c$ is the number of colors in the interaction part of the
model. Notice, that if the model is not colored, that is the number
of colors is one, the $U\left(1\right)$ symmetry is not present.
The overall symmetry group for simplicial models becomes 
\begin{equation}
G^{\times n}\times G\times U\left(1\right)^{\#c-1},
\end{equation}
where $n$ depends on the actual combinatorial pattern, as we have
shown. This classification of symmetries also fits the model with
the interaction part of the type $\int\phi\bar{\phi}\phi\bar{\phi}$,
since we defined it as a model with two different colors.

The symmetry group of colored models, which is independent from the
precise combinatorial pattern of field arguments, is the largest compatible
with the one of the local part of the action (and with gauge invariance),
and coincides with the one of the corresponding tensorial model. This
gives a different perspective, and confirms, the close relation between
colored simplicial models and tensorial ones, highlighted first in
\citep{Bonzom:2012hw} in terms of properties of the corresponding
functional integrals.

\subsubsection{Barrett-Crane model}

We now briefly discuss the implication of the simplicity constraints,
in the Barrett-Crane formulation, on the symmetry group.

Applying the above analysis to the BC model from equation \ref{eq:BC interaction}
defined by the following function $f$ 
\begin{gather}
f:\left(g_{1},\cdots g_{10};k_{1},\cdots,k_{4}\right)\mapsto\\
\left(g_{1,2,3,4};k_{1}\right)\left(g_{4,5,6,7};k_{2}\right)\left(g_{7,3,8,9};k_{3}\right)\nonumber \\
\times\left(g_{9,6,2,10};k_{4}\right)\left(g_{10,8,5,1};k_{5}\right),
\end{gather}
we realize that the symmetry group for the gauge invariant BC model
without simplicity constrains would be that of an extended Ooguri
model from equation \ref{eq:Ooguri model} where the group $G$ is
now specified to $Spin\left(4\right)$

\begin{equation}
Spin\left(4\right)^{\times2}\times Spin\left(4\right)\times G\left(k\right).\label{eq:BC without Simplicity}
\end{equation}
The group $G\left(k\right)$ denotes a group of transformations of
the $SU\left(2\right)$ element $k_{i}$. However, remember that the
extension of the GFT field the $SU\left(2\right)$ variable $k$ was
needed for consistent implication of simplicity constrains and therefore
the actual meaning of $G\left(k\right)$ is relevant only after the
imposition of simplicity constrains. 

Equation \ref{eq:BC without Simplicity} provides the symmetry group
of extended Ooguri model with gauge invariance, in order to obtain
the symmetry group of the BC model simplicity constraints need to
be further imposed. We refer to the appendix \ref{sec:Barrett-Crane-model}
for explicit calculations and state here just the result of imposing
the simplicity constrains on the field $\phi$, by imposing invariance
under the projector $\mathcal{S}$ from equation \ref{eq:Simplicity constraint}.
As we show in the appendix \ref{sec:Barrett-Crane-model}, the simplicity
constrains 
\begin{equation}
\phi\circ\mathcal{S}=\phi,
\end{equation}
reduce the symmetry group of the Ooguri model (for the chosen combinatorics)
down to 
\begin{equation}
Spin\left(4\right)^{\times2}\times SU\left(2\right),
\end{equation}
where the $SU\left(2\right)$ group replaces the $Spin\left(4\right)\times G\left(k\right)$
part from equation \ref{eq:BC without Simplicity} and acts on the
elements of the local base manifold of the BC model $Spin\left(4\right)^{\times4}\times SU\left(2\right)$
by conjugation as 
\[
c\circ\left(\vec{g}_{-},\vec{g}_{+};k\right):=\left(c,c;c\right)\cdot\left(\vec{g}_{-},\vec{g}_{+};k\right)\cdot\left(c^{-1},c^{-1};c^{-1}\right).
\]
And $Spin(4)^{\times2}$ acts by the left multiplication 
\[
\left(G_{1},G_{2},G_{2},G_{1}\right)\left(\vec{g};k\right)=\left(G_{1}g^{1},G_{2}g^{2},G_{2}g^{3},G_{1}g^{4};k\right).
\]
The same considerations we have made regarding the dependance of the
symmetry group on the combinatorics and on the use of colors apply
also to the Barrett-Crane case. 

In table \ref{tab:Models-and-their symmetries} we summarize the symmetries
of different interaction terms.

\section{Classical currents}

\begin{table*}[t]
\begin{centering}
\renewcommand*{\arraystretch}{2}%
\begin{tabular}{|>{\centering}m{6cm}|>{\centering}m{4.5cm}|>{\centering}m{6.5cm}|}
\hline 
\textbf{Model}  & \textbf{Symmetry Group }  & \textbf{Action}\tabularnewline
\hline 
\hline 
\multirow{2}{6cm}{\centering$\phi_{1,2,3}^{1}\phi_{1,4,5}^{2}\phi_{6,2,5}^{3}\phi_{6,4,3}^{4}$} & $SU\left(2\right)^{\times3}$  & $\Delta=-2\lambda\sum_{c}\int\delta^{c}\,\Re\left[X_{cM}^{\oplus3}\left(\phi^{1}\phi^{2}\phi^{3}\phi^{4}\right)\right]$\tabularnewline
\cline{2-3} 
 & $U\left(1\right)^{\times3}$  & $\Delta=-\imath2\lambda\,\sum_{c}\theta_{c}\int\delta^{c}\,\Im\left(\phi^{1}\phi^{2}\phi^{3}\phi^{4}\right)$\tabularnewline
\hline 
$\phi_{1,2,3}\,\phi_{3,4,5}\,\phi_{5,2,6}\,\phi_{6,4,1}$  & $G^{\times2}$  & $\Delta=-8\lambda\,\Re\left(X_{M}^{\oplus2}\left(\phi\right)\int\phi\phi\phi\right)$\tabularnewline
\hline 
$\phi_{1,2,3}^{P}\,\phi_{4,3,5}^{P}\,\phi_{5,2,6}^{P}\,\phi_{6,4,1}^{P}$  & $G$  & $\Delta=-8\lambda\,\Re\left(X_{M}\left(\phi\right)\int\phi\phi\phi\right)$\tabularnewline
\hline 
\multirow{2}{6cm}{\centering$\phi_{1,2,3}\,\bar{\phi}_{3,4,5}\,\phi_{5,2,6}\,\bar{\phi}_{6,4,1}$ } & $G^{\times2}$  & $\Delta=-4\lambda\,\Re\left(X_{M}\left(\phi\right)\int\bar{\phi}\phi\bar{\phi}\right)$\tabularnewline
\cline{2-3} 
 & $U\left(1\right)$  & $\Delta=-4\lambda\theta\,\Im\left(\phi\int\bar{\phi}\phi\bar{\phi}\right)$\tabularnewline
\hline 
$\phi_{1,2,3,4}\,\phi_{4,5,6,7}\,\phi_{7,3,8,9}\,\phi_{9,6,2,10}\,\phi_{10,8,5,1}$  & $SU\left(2\right)^{\times2}$  & $\Delta=-10\lambda\,\Re\left(X_{M}^{\oplus2}\left(\phi\right)\int\phi\phi\phi\phi\right)$\tabularnewline
\hline 
\multirow{2}{6cm}{\centering Barrett-Crane} & $Spin\left(4\right)^{\times2}$  & $\Delta=-10\lambda\,\Re\left(X_{M}^{\oplus2}\left(\phi\right)\int\phi\phi\phi\phi\right)$\tabularnewline
\cline{2-3} 
 & $SU\left(2\right)$  & $\Delta=-10\lambda\,\Re\left(X_{M}\left(\phi\right)\int\phi\phi\phi\phi\right)$\tabularnewline
\hline 
\end{tabular}
\par\end{centering}
\centering\caption{Models and their correspondent correction terms. The vector fields
$X_{M}$ are the left invariant vector fields given in equation \ref{eq:left}
\label{tab:Models-and-their-correction terms}}
\end{table*}

We will now derive the (generalized) conservation laws for the symmetries
we identified in the last section. Once more we limit ourselves to
the classical regime of the GFTs, postponing the analysis of the full
quantum theory. Also, we stress again that the conservation laws and
corresponding currents, just like the whole kinematics and dynamics
of such quantum field theories, should not be interpreted in spatiotemporal
or geometric terms, at least in general. Even for GFT models with
a direct quantum gravity interpretation, the spatiotemporal and geometric
meaning of the various aspects and regimes of each model should be
extracted and analyzed with care. On this note, we point out that
the classical GFT equations of motion of 4d quantum gravity models,
which capture the hydrodynamics of special condensate states of the
theory, have been given a cosmological interpretation and have been
studied in some detail and with remarkable results in a series of
recent works \citep{Gielen:2013kla,Gielen:2013naa,Gielen:2014usa,Gielen:2014uga,Gielen:2014ila,Gielen:2015kua,Oriti:2016qtz,Oriti:2016ueo,deCesare:2016rsf}.

\subsection{Conservation laws in non-local field theories}

In local field theories there is a conserved current associated to
every continuous symmetry of the action given by the famous Noether
theorem. However, for non-local theories this result does not hold
as such, and must be generalized, due to the fact that the equations
of motion become integro-differential equations.

In \citep{Kegeles:2015oua} we derived an equivalent expression for
Noether currents for the case of non-local field theories, and for
the associated generalized conservation laws. In order to keep the
notation simple we present here a simplified version of the theorem,
referring to the original work for the full statement. 
\begin{thm}
If a non-local action $S=\int_{M}S^{L}+\int_{\tilde{M}}S^{I}$ is
symmetric under a group action generated by the vector fields $\left(X_{M},X_{V}\right)$
then the following identity holds for all $i$ 
\begin{align}
EL\left[X_{Q}\right] & =\sum_{c}\int D_{F}L^{I}\left(X_{cQ}\right)\left[\delta^{c}-\delta^{i}\right]\nonumber \\
 & -\text{Div}_{M}\left(D_{J}L^{L}\left[X_{Q}\right]+L^{L}\cdot X_{M}\right)\nonumber \\
 & -\text{Div}_{M}\left(\int_{\Omega}D_{J}L^{I}\left(X_{Q}^{c}\right)\delta^{c}\right)\label{eq:General non-local Noether theorem}\\
 & -\int_{\tilde{\Omega}}\text{Div}_{\tilde{M}}\left(L^{I}\cdot X_{\tilde{M}}\right)\delta^{i}.\nonumber 
\end{align}
Here $X_{\tilde{M}}$ denotes the vector field of base manifold transformations
of $\tilde{M}$ generated by $X_{M}$ as we discussed in the previous
section, $D_{F}L\left(X_{cQ}\right)$ denotes the Fréchet derivatives
of the Lagrangian in the direction of $X_{cQ}$, $\delta^{c}$ denotes
the delta distribution on the domain of the field of color $c$ and
the non-local Lagrangian $L^{I}=L^{I}\left(x,\phi\left(x\right),\partial\phi\vert_{x}\right)$
is assumed to be a function on the base manifold, fields at the point,
and first derivatives of the fields at the same point. The left hand
side denotes the equations of motion contracted with the vector field
$X_{Q}$.
\end{thm}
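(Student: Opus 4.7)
The plan is to split the symmetry condition for the non-local action into pointwise conditions on the two distinct base manifolds $M$ and $\tilde{M}$, and then re-localize the second one to a chosen reference point $i\in M$ by inserting delta distributions. By the definition of a symmetry of a non-local action adopted in this paper, invariance of $S$ is equivalent to invariance of each summand $\int_M L^L$ and $\int_{\tilde{M}} L^I$ under the same generators. The preceding Theorem then yields two pointwise identities: the usual condition $D_J L^L\cdot D X_Q + D_V L^L\cdot X_Q + \text{Div}_M(L^L X_M)=0$ on $M$, and an analogous one on $\tilde{M}$ in which $X_{\tilde{M}}$ is determined by $X_M$ through the embedding relation \ref{eq:Relation between C's}.

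The first piece is handled by the ordinary Noether manipulation: integrating by parts the $D_J L^L\cdot D X_Q$ term reassembles the Euler-Lagrange contribution of $L^L$ together with the divergence $\text{Div}_M(D_J L^L[X_Q]+L^L\cdot X_M)$, which is exactly the second term on the right-hand side of \ref{eq:General non-local Noether theorem}. The subtler piece is the non-local one, because $L^I$ depends on field values at the several points on $M$ that are coupled by the embedding $f$. Here I would use the Fr\'echet-derivative representation of the variation of $L^I$ along the pull-back $\phi^c=(\phi^0)^{\times n_i}\circ f$ developed in \citep{Kegeles:2015oua}: the variation in the color-$c$ channel produces $D_F L^I(X_{cQ})$, and integrating over $\tilde{\Omega}$ with the help of $\delta^c$ to flag which argument of $L^I$ is being varied yields an expression naturally defined on $M$.

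The key step is to localize the non-local symmetry condition at the reference point $i\in M$ by inserting $\delta^i$. Integration by parts of the jet-derivative piece against $\tilde{M}$ then produces the divergence $\text{Div}_M(\int_\Omega D_J L^I(X_Q^c)\delta^c)$; the term $\text{Div}_{\tilde{M}}(L^I X_{\tilde{M}})$ picks up the factor $\delta^i$ and becomes the final summand of \ref{eq:General non-local Noether theorem}; and the variational piece equals $\sum_c\int D_F L^I(X_{cQ})\delta^c$. Adding and subtracting $\sum_c D_F L^I(X_{cQ})\delta^i$ in the last expression converts it into $\sum_c\int D_F L^I(X_{cQ})[\delta^c-\delta^i]$ plus a diagonal contribution that, combined with the local Euler-Lagrange piece from $L^L$, assembles into the total $EL[X_Q]$ of the full action on the left-hand side. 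Rearranging the four contributions reproduces the claimed identity.

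The main obstacle is distributional bookkeeping: one must carefully distinguish the reference point $i\in M$ at which a pointwise identity is wanted from the auxiliary points at which the fields enter $L^I$ via $f$, and keep track of which integration takes place on $\Omega\subset M$ versus $\tilde{\Omega}\subset\tilde{M}$. The delta distributions provide exactly this translation mechanism; once they are inserted consistently, the remainder is integration by parts and relabeling. A useful sanity check is the local limit $\tilde{M}=M$ in which $\delta^c=\delta^i$, so that the Fr\'echet-derivative remainder vanishes and the identity collapses to the ordinary Noether theorem.
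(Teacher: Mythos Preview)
The paper does not prove this theorem. Immediately before stating it, the authors write that they ``present here a simplified version of the theorem, referring to the original work for the full statement,'' the original work being \citep{Kegeles:2015oua}; no derivation appears anywhere in the present paper. There is therefore nothing here to compare your proposal against.

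That said, your sketch is the natural one and matches what the cited derivation is expected to contain. Applying the preceding pointwise symmetry condition separately to $L^{L}$ on $M$ and to $L^{I}$ on $\tilde{M}$, integrating by parts, and then localizing the non-local contribution at a chosen color $i$ via the $\delta^{c}$ distributions is exactly the mechanism the paper alludes to (and uses explicitly in the subsequent simplification and in Section~IV). Your add-and-subtract step producing the $[\delta^{c}-\delta^{i}]$ structure is the right idea: the diagonal piece $\sum_{c}\int D_{F}L^{I}(X_{cQ})\,\delta^{i}$ is precisely the interaction contribution to the full equations of motion at the point $i$, so it combines with the local Euler--Lagrange term from $L^{L}$ to give $EL[X_{Q}]$ on the left-hand side. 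Your local-limit sanity check ($\tilde{M}=M$, $\delta^{c}=\delta^{i}$) is also correct. The only caution is the one you already flag: the distributional bookkeeping between $\Omega\subset M$ and $\tilde{\Omega}\subset\tilde{M}$, and the identification of $X_{\tilde{M}}$ via \ref{eq:Relation between C's}, must be done carefully; but this is a matter of notation rather than a gap in the argument.
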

In the case when the non-local Lagrangian is independent of derivatives
of the fields, and the generators of the symmetry group of base manifold
transformations are divergence-free, $\text{div}\left(X_{M}\right)=0$,
and when the transformations of the field values is proportional to
the field value itself, $Q\left(\phi_{1,2,3}\right)\propto\phi_{1,2,3}$,
the above identity simplifies significantly to

\begin{align}
EL\left[X_{Q}\right] & =\Delta-\text{Div}_{M}\left(D_{J}L^{L}\left[X_{Q}\right]+L^{L}\cdot X_{M}\right),
\end{align}
where 
\begin{equation}
\Delta=\sum_{c}\int D_{cV}L^{I}\left(X_{Q}^{c}\right)\delta^{c},
\end{equation}
is referred to as correction term. This result explicitly shows that
the currents associated to symmetries of the non-local action are
no longer conserved. Instead their divergence are dictated by the
non-local part of the action.

After imposing equations of motion on the fields, we get the identity
that replaces the usual Noether theorem 
\begin{equation}
\text{Div}_{M}\left(D_{J}L^{L}\left[X_{Q}\right]+L^{L}\cdot X_{M}\right)=:\text{Div}\left(J\right)=\Delta.\label{eq:non-local Noether theorem}
\end{equation}
The quantity in brackets on the left hand side is the Noether current
of the local part of the action and the right hand side of the equation
is the non-vanishing divergence of the current due to non-local structure
of the theory.

It becomes now a straightforward calculation to apply the equation
\ref{eq:non-local Noether theorem} to models and symmetries introduced
in the previous section. In the rest of this section we summarize
the resulting identities. 

Since the local part of all our models is given by equation \ref{eq:local part},
the Noether current does not change and can be written as 
\begin{equation}
J=\kappa\sum_{c}\left(\nabla\phi^{c}\cdot\bar{X}_{Q}+\nabla\bar{\phi}^{c}\cdot X_{Q}\right)+L^{L}\cdot X_{M}.
\end{equation}
Note that for the $U\left(1\right)$ symmetry $X_{M}=0$, and the
Noether current becomes proportional to $\kappa$. This automatically
implies that for all static models the Noether current associated
to the $U\left(1\right)$ symmetry is zero. The correction term, however,
may not trivially vanish. Apart from the values for $\kappa$, the
models introduced earlier will differ only by the correction term
in equation \ref{eq:non-local Noether theorem}. In table \ref{tab:Models-and-their-correction terms}
we will present the correction terms for discussed models.

The notation in table \ref{tab:Models-and-their-correction terms}
is as follows. For brevity we do not indicate the base points and
write $\int\phi^{1}\phi^{2}\phi^{3}\phi^{4}$ in order to refer to
the non-local part of the Boulatov action. We also write $\phi^{1}\int\phi^{2}\phi^{3}\phi^{4}$
for 
\begin{equation}
\phi_{1,2,3}^{1}\int\text{d}g_{4,5,6}\,\phi_{3,4,5}^{2}\phi_{5,2,6}^{3}\phi_{6,4,1}^{4}.\label{eq:example of integral notation}
\end{equation}
The integral $\int\phi^{2}\phi^{3}\phi^{4}$ can be seen as a function
evaluated at the point $\left(g_{1},g_{2},g_{3}\right)$. We denote
the Lie derivative of this function with respect to the vector field
$X_{M}$ as $X_{M}\left(\int\phi\phi\phi\right)$. For brevity we
denote the expression in \ref{eq:example of integral notation} also
by the formal delta distribution 
\[
\int\phi^{1}\phi^{2}\phi^{3}\phi^{4}\,\delta^{1}=\phi_{1,2,3}^{1}\int\text{d}g_{4,5,6}\,\phi_{3,4,5}^{2}\phi_{5,2,6}^{3}\phi_{6,4,1}^{4},
\]
meaning that the integral over the domain of the field $\phi^{1}$
is to be excluded.

\section{Conserved charges in presence of matter}

In this section we will discuss the consequences of the matter coupling
introduced in \citep{Oriti:2016qtz} and show that such coupling implies
the existence of quantities which are constant in the matter field
variable, and can be interpreted as \textsl{conserved charges}.

While this could be taken as a fact of purely mathematical interest,
it may also indicate some underlying interesting physics, for quantum
gravity models. The reason is the following. The type of matter field
introduced in \citep{Oriti:2016qtz} was a free, massless, minimally
coupled real scalar field, entering as an additional variable in the
domain of the GFT fields, for 4d gravity models, whose classical dynamics
was then studied. The same classical dynamics was given an interpretation
as cosmological dynamics for continuum homogeneous universes, emerging
from the GFT system as quantum condensates. As customary in quantum
cosmology, and to some extent compulsory in background independent,
diffeomorphism invariant theories, the dynamics was expressed in terms
of relational observables \citep{Dittrich:2004cb,Dittrich:2005kc,Baratin:2010nn}.
In particular, the added scalar field was chosen to play the role
of relational clock, i.e. the physical time variable in terms of which
describing the evolution of all the geometric observables, e.g. the
volume of the universe. We refer to \citep{Oriti:2016qtz} for more
details. Remarkably, the same variable enters the GFT action just
as a standard, local time coordinate would enter an ordinary field
theory. This suggests a deeper physical meaning for the charges that,
following some symmetry of the corresponding GFT model, are in fact
conserved with respect to the same relational time variable/clock.
We do not discuss further the possible physical interpretation and
confine ourselves to the mathematical analysis of such extended models.

The domain of the GFT field is extended to become 
\begin{equation}
\phi:M\times\mathbb{R}\to\mathbb{C},
\end{equation}
where $M$ is the base manifold of the correspondent GFT model without
matter and $\mathbb{R}$ describes the degree of freedom of a real
scalar field. We call the new base manifold $M_{\text{mat}}=M\times\mathbb{R}$,
and denote a point on $M_{\text{mat}}$ as $\left(g_{1},\cdots,g_{n},\varphi\right)$.
The field value at this point is then denoted $\phi\left(g_{1},\cdots,g_{n},\varphi\right)=\phi_{1,\cdots,n,\varphi}$.
Intuitively we can think of GFT field $\phi$ as describing a ``chunk''
of space in which the scalar field takes the value $\varphi$.

The dynamics is then described by an action which is non-local in
the group variables, but local in the additional matter field variable.
This means that every Lagrangian in the non-local action sum is evaluated
at the same value $\varphi$: 
\begin{eqnarray}
S & = & \int_{M\times\mathbb{R}}\phi\left(\vec{g},\varphi\right)K\left(\vec{g},\varphi\right)\phi\left(\vec{g},\varphi\right)\nonumber \\
 & + & \int_{M\times\mathbb{R}}\phi\left(\vec{g},\varphi\right)\cdots\phi\left(\vec{h},\varphi\right)V\left(\vec{g},\cdots,\vec{h},\varphi\right).\quad,\label{eq:matter field action}
\end{eqnarray}
with the dependence of the various terms in the action on the additional
scalar field being motivated by an analysis of the GFT Feynman amplitudes
and their relation with simplicial gravity path integrals, and by
the required symmetries of the scalar field dynamics. The further
requirements that the scalar field is free, massless and minimally
coupled, plus some further approximation motivated by the hydrodynamics
setting \citep{Oriti:2016qtz}, lead to $K\left(\vec{g},\varphi\right)=\mathcal{K}\left(\vec{g}\right)+\Delta_{\varphi}$,
and to a vertex function $V$ that is independent of $\varphi$.

\subsection{Conserved charges and symmetries}

Locality in the matter field allows to define a local slicing with
respect to which we can construct conserved quantities $Q\left(\varphi\right)$
for \textsl{any} symmetry of the action, such that $\partial_{\varphi}Q\left(\varphi\right)=0$.
This is easily seen from the equation \ref{eq:General non-local Noether theorem},
where the integral domain is now replaced by $M_{\text{mat}}=M\times\mathbb{R}$
and the delta function $\delta^{c}$ that acts on the domain of the
field with color $c$ can be written as $\delta_{M}^{c}\delta_{\mathbb{R}}^{\varphi}$,
where $\delta_{M}^{c}$ acts on the group part of the domain and $\delta_{\mathbb{R}}^{\varphi}$
fixes the value of the matter field. Integrating the above equation
over $M$ and taking into account that the action is local in the
parameter $\varphi$, as well as the fact that the base manifold $M$
has no boundary\footnote{If the underling group of the model has a boundary, then boundary
terms need to taken into account.}, the above equation simplifies to 
\begin{align}
\int_{M}EL\vert_{\varphi}\left[X_{Q}\right] & \simeq\partial_{\varphi}\left(\partial_{\partial_{\varphi}\phi^{c}}L^{\text{loc}}\left[X_{Q}^{c}\right]+\partial_{\partial_{t}\bar{\phi}^{c}}L^{\text{loc}}\left[\bar{X}_{Q}^{c}\right]\right)\nonumber \\
 & +\partial_{\varphi}\left(L^{\text{loc}}\cdot X_{\varphi}\right)\nonumber \\
 & +\partial_{\varphi}\int_{M}\left(D_{\partial_{\varphi}\phi^{c}}L^{\text{int}}\left(X_{Q}^{c}\right)+D_{\partial_{\varphi}\bar{\phi}^{c}}L^{\text{int}}\left(\bar{X}_{Q}^{c}\right)\right)\nonumber \\
 & +\partial_{\varphi}\int_{M}\left(L^{\text{int}}\cdot X_{\varphi}\right),\label{eq:Charge equation}
\end{align}
where the equality is true up to a minus sign. Taking the $\varphi$
component of the current we get 
\begin{eqnarray}
Q\left(\varphi\right): & = & \partial_{\partial_{\varphi}\phi^{c}}L^{\text{loc}}\left[X_{Q}^{c}\right]+\partial_{\partial_{\varphi}\bar{\phi}^{c}}L^{\text{loc}}\left[\bar{X}_{Q}^{c}\right]+L^{\text{loc}}\cdot X_{\varphi}\nonumber \\
 & + & \int_{M}\left(D_{\partial_{\varphi}\phi^{c}}L^{\text{int}}\left(X_{Q}^{c}\right)+D_{\partial_{\varphi}\bar{\phi}^{c}}L^{\text{int}}\left(\bar{X}_{Q}^{c}\right)\right)\nonumber \\
 & + & \int_{M}\left(L^{\text{int}}\cdot X_{\varphi}\right).
\end{eqnarray}
Due to equation \ref{eq:Charge equation}, this satisfies on shell
\begin{equation}
\partial_{\varphi}Q\left(\varphi\right)=0.
\end{equation}
Since the interaction Lagrangian does not depend on derivatives of
$\phi$, the conserved charge becomes 
\begin{eqnarray}
Q\left(\varphi\right): & = & S\vert_{\varphi}\cdot X_{\mathbb{R}}^{\varphi}\label{eq:General charge equation}\\
 &  & +\int_{M}\left(\partial_{\partial_{\varphi}\phi^{c}}L^{\text{loc}}\left[X_{Q}^{c}\right]+\partial_{\partial_{\varphi}\bar{\phi}^{c}}L^{\text{loc}}\left[\bar{X}_{Q}^{c}\right]\right),\nonumber 
\end{eqnarray}
where $S\vert_{\varphi}=\int_{M}L^{\text{loc}}\vert_{\varphi}+\int_{M}L^{\text{int}}\vert_{\varphi}$
is the action in equation \ref{eq:matter field action} at a fixed
value of the parameter $\varphi$.

For example in the case of a $U\left(1\right)$ symmetry which is
generated by $X_{Q}=\imath\theta_{c}\phi^{c}$, with $\sum_{c}\theta^{c}=0$,
we get the conserved charge

\begin{equation}
Q\left(\varphi\right)=\imath\sum_{c}\theta^{c}\int_{M}\,\left(\phi^{c}\partial_{\partial_{t}\phi^{c}}L^{\text{loc}}-\bar{\phi}^{c}\partial_{\partial_{t}\bar{\phi}^{c}}L^{\text{loc}}\right).
\end{equation}

For the $SU\left(2\right)$ symmetry, with $X_{Q}=-X_{M}\left(\phi\right)$
and $X_{M}$ being left invariant generators of $SU\left(2\right)$
as in equation \ref{eq:left}, $Q$ takes instead the form 
\begin{eqnarray*}
Q\left(\varphi\right) & = & -\int_{M}\partial_{\partial_{\varphi}\phi^{c}}L^{\text{loc}}\left[X_{cM}\left(\phi^{c}\right)\right]\\
 &  & -\int_{M}\partial_{\partial_{\varphi}\bar{\phi}^{c}}L^{\text{loc}}\left[X_{cM}\left(\bar{\phi}^{c}\right)\right].
\end{eqnarray*}
This shows that we can easily calculate ``conserved'' quantities
for the symmetries we found earlier in the paper.

However, in addition to the symmetries on the group space we may also
have symmetries on $\mathbb{R}$ which correspond to symmetries of
the matter field, so the symmetry group of the models will be larger.

In general, the symmetry of the matter field will also be strongly
model dependent, and have to be investigated on a case by case basis.
However, in the case of free, massless, minimally coupled scalar matter,
the action is (and should be) invariant under matter field translations
of the form $\varphi\mapsto\varphi+\mu$. The charge for this symmetry
will take the following form 
\[
Q\left(\varphi\right)=-\int_{M}\left(\partial_{\partial_{\varphi}\phi^{c}}L^{\text{loc}}\partial_{\varphi}\phi^{c}+\partial_{\partial_{\varphi}\bar{\phi}^{c}}L^{\text{loc}}\partial_{\varphi}\bar{\phi}^{c}\right)+S\vert_{\varphi}.
\]
Defining $\Pi^{c}:=\partial_{\partial_{\varphi}\phi^{c}}L^{\text{loc}}$,
this takes the form of the Legendre transform of the Lagrangian defined
by $S\vert_{t}$ 
\begin{equation}
Q\left(\varphi\right)=-\int_{M}\left(\Pi^{c}\partial_{\varphi}\phi^{c}+\bar{\Pi}^{c}\partial_{\varphi}\bar{\phi}^{c}\right)+S\vert_{\varphi}.
\end{equation}

This is of course extremely suggestive of a GFT Hamiltonian with respect
to the evolution defined by the relational ``time'' $\varphi$,
and this is certainly an important point to be investigated further,
in both its mathematical and physical consequences.

It is important to note that there are very special conditions that
the matter field $\varphi$ has to satisfy to represent a good relational
clock. It is interesting to investigate further also how these conditions,
and their relaxation, reflect on the dependence of the GFT action
on the same matter field variable, and what field-theoretic consequences
they have, in particular concerning the existence and form of the
conserved charges we have found. Moreover, it is easy to realize that,
if the model has more than one matter field that enters the action
locally, the above treatment can be performed for any of the matter
fields. In this case, however, above equations will contain additional
boundary terms. We leave further analysis of these points to future
work.

\section{Conclusion and outlook}

In this paper we provided an extensive symmetry analysis for various
models in Group Field Theory.

We have elucidated the symmetry group of various GFT models, and how
it is affected by the various ingredients entering their definition:
rank, base group, color, combinatorial structure.

Our main result shows that, apart form the expected symmetry groups
of left multiplication and $U\left(1\right)$, the discussed models
do not posses any other continuous Lie point symmetries. This holds
even in the case of static, gauge invariant, models, in which the
Lagrangian does not depend on derivatives of fields. This is not obvious
since an ordinary local field theory without dynamical terms would
possess a fairly large gauge group of diffeomorphisms of the base
manifold. However, the presence of the interaction term with a particular
combinatorial structure as well as the requirement of gauge invariance
insures that the symmetry group becomes very small. In this sense
our treatment provides a complete set of point symmetries of discussed
models.

Using our previous result on conservation laws for non-local theories
we were then able to calculate generalized ``conservation'' laws
that correspond to continuous symmetries. And were able to show that
in particular cases of matter coupling to GFT fields our construction
provides a notion of conserved charges, the same way Noether theorem
does in local field theories. An existence of conserved quantities
shows, that once a matter field satisfies a notion of a ``good''
clock it also obtains the usual ``time'' properties in the field
theoretical frame-work. As we already mentioned, a lot more should
be understood about such conserved charges in GFT models.

It is an exciting and important task to understand the consequences
of the GFT symmetry groups on the physics of these models. This is
what needs primarily to be addressed in the future.

On the one hand, an understanding of conservation laws in terms of
geometrical objects could be a very important step in the development
of the theory. Conservation laws and conserved charge equations could
provide a field theoretical explanation of cosmological features stemming
from the underlying quantum gravity models, in the context of GFT
condensate cosmology \citep{Gielen:2016dss,Oriti:2016qtz,Oriti:2016ueo}.

The very existence of a condensate phase in GFT models, and more generally
their macroscopic phase diagram, currently being explored mainly by
FRG methods \citep{Benedetti:2014qsa,Geloun:2015qfa,Geloun:2016qyb},
can now be studied also on the basis of GFT symmetries and corresponding
symmetry breaking.

On the other hand a classification of symmetry groups in GFT could
be used as a better characterization of the theory space, a crucial
ingredient for systematic renormalization group studies \citep{BenGeloun:2011rc,Geloun:2011cy,Carrozza:2012uv,Carrozza:2013mna,Geloun:2016bhh}.

In particular this could help clarifying the connection between simplicial
and tensorial GFT models. As we noted, a further indication of their
close connection has been found already in our analysis, showing that
only colored GFT models of simplicial type appear to have an $U\left(1\right)$
symmetry as well as the unrestricted translation invariance that is
found in tensorial GFTs.

From a more mathematical point of view, it appears to be very interesting
to understand the extension of the symmetry groups we considered to
Lie-Baecklund or generalized symmetries, which requires a better characterization
of the equivalence class of GFT actions leading to the same classical
equations of motion.

Finally, we need to go beyond the purely classical analysis performed
in this paper, and move to the analysis of the same symmetries we
have identified at the quantum level, deriving and studying in detail
the corresponding Ward identities, and the issue of possible anomalies.

\newpage{}

\appendix

\section{Reduction of transformations due to gauge invariance \label{sec:Reduction-of-transformations}}

From equation \ref{eq:gauge invariance condition} the requirement
on the transformation reads 
\begin{equation}
C\left(\vec{g}h\right)=C\left(\vec{g}\right)\tilde{h}.
\end{equation}
Writing out this equation in components we get 
\begin{eqnarray}
C^{1}\left(g_{1}h\right) & = & C^{1}\left(g_{1}\right)\tilde{h}\nonumber \\
C^{2}\left(g_{2}h\right) & = & C^{2}\left(g_{2}\right)\tilde{h}\label{eq:Condition due to gauge invariance}\\
C^{1}\left(g_{3}h\right) & = & C^{1}\left(g_{3}\right)\tilde{h},\nonumber 
\end{eqnarray}
with $C^{i}$ being a diffeomorphism on the group $G$. At this point
we employ the known relation 
\begin{equation}
\text{Diff}\left(G\right)\simeq G\times\text{Diff}_{\mathds{1}}\left(G\right),
\end{equation}
that states that the group of diffeomorphisms on $G$ is diffeomorphic
(as a manifold) to the group $G$ itself (that acts by left multiplication)
times a group of diffeomorphisms that stabilizes the identity of $G$,
denoted $\text{Diff}_{\mathds{1}}\left(G\right)$. This implies that
every $C^{i}$ can be written by some $c^{i}\in G$ and $\mathcal{D}^{i}\in\text{Diff}_{\mathds{1}}\left(G\right)$
such that $C^{i}\left(g\right)=c^{i}\,\mathcal{D}^{i}\left(g\right)$
with $\mathcal{D}^{i}\left(\mathds{1}\right)=\mathds{1}$. Inserting
this relation in the equations \ref{eq:Condition due to gauge invariance}
and evaluating it a the point $g_{1}=g_{2}=g_{3}=\mathds{1}$ we observe
\begin{eqnarray}
c^{1}\cdot\mathcal{D}^{1}\left(h\right) & = & c^{1}\cdot\tilde{h}\\
c^{2}\cdot\mathcal{D}^{2}\left(h\right) & = & c^{2}\cdot\tilde{h}\\
c^{1}\cdot\mathcal{D}^{3}\left(h\right) & = & c^{1}\cdot\tilde{h},
\end{eqnarray}
which, in tern, implies 
\begin{equation}
\mathcal{D}^{1}\left(h\right)=\mathcal{D}^{2}\left(h\right)=\mathcal{D}^{3}\left(h\right)=\tilde{h}=:\mathcal{D}\left(h\right).
\end{equation}
Inserting this relation again in \ref{eq:Condition due to gauge invariance}
at an arbitrary point $\vec{g}$ we get for $\mathcal{D}$ 
\begin{equation}
\mathcal{D}\left(g_{i}h\right)=\mathcal{D}\left(g_{i}\right)\mathcal{D}\left(h\right).
\end{equation}
In other words $\mathcal{D}$ is an homomorphism and therefore an
automorphism. On $G$ however, the group of automorphisms splits in
the inner automorphisms which are given by a conjugation with a fixed
group element and outer automorphisms which are given by automorphisms
of the Dynkin diagram of the group and relate to the discreet symmetries.
Focusing on continuos transformations we get 
\begin{equation}
\mathcal{D}\left(g\right)=d\cdot g\cdot d^{-1}
\end{equation}
for some fixed $d\in SU\left(2\right)$.

\section{Barrett-Crane model \label{sec:Barrett-Crane-model}}

In this section we are going to show what are the admissible transformation
in the Barrett-Crane model.

In the following we will denote a group element of $Spin\left(4\right)$
by its two copies of $SU\left(2\right)$ a 
\[
Spin\left(4\right)\ni g=\left(g_{-},g_{+}\right),
\]
a tuple of four elements is referred to by the vector notation
\[
\vec{g}=\left(\vec{g}_{-},\vec{g}_{+}\right).
\]
We will also sometimes write $g_{1,2,3,4}$ for the tuple of elements
$\left(g_{1},g_{2},g_{3},g_{4}\right)$.

A base manifold transformation of the model is denoted by $C:Spin\left(4\right)^{\times4}\times SU\left(2\right)\to Spin\left(4\right)^{\times4}\times SU\left(2\right)$.
We denote the components of this transformation as 
\[
C\left(\vec{g},k\right)=\left(\left(C_{1}^{-},C_{1}^{+}\right),\cdots,\left(C_{4}^{-},C_{4}^{+}\right),C_{k}\right).
\]
Here all the component functions $C_{i}^{\pm}$ are functions on the
base manifold and therefore depend on points of the form $\left(\vec{g},k\right)$.
However, the combinatorial structure of the BC model dictates the
following conditions on the components 
\begin{align*}
C_{1}\left(g_{1,2,3,4},k_{1}\right) & =C_{4}\left(g_{10,8,5,1},k_{5}\right)\\
C_{2}\left(g_{1,2,3,4},k_{1}\right) & =C_{3}\left(g_{9,6,2,10},k_{4}\right).
\end{align*}
From the above relations we see that the components of the transformation
have the following dependences 
\[
C\left(g_{1,2,3,4},k\right)=\left(C_{1}\left(g_{1}\right),C_{2}\left(g_{2}\right),C_{3}\left(g_{3}\right),C_{4}\left(g_{4}\right),C_{k}\left(k\right)\right).
\]
A priori we do not have any additional constrains on the component
$C_{k}$. However, since $C$ is a diffeomorphism and $C_{i}$ are
diffeomorphisms, the transformation of the normal has to be a diffeomorphism
as well \footnote{Notice, that it would not be true if we didn't have restriction on
$C_{i}$, since then $C_{i}$ would not be a diffeomorphism and hence
neither needs to be $C_{k}$.}. Again invoking the diffeomorphism of manifolds $\text{Diff}\left(Spin\left(4\right)\right)\simeq Spin\left(4\right)\times\text{Diff}_{\mathds{1}}$
we denote the components of $C$ that belong to $\text{Diff}_{\mathds{1}}$
by the lower case $c$.

At this point we remind the reader that in the Barrett-Crane model
the gauge invariance of the fields was extended to incorporates simplicity
constrains 
\begin{eqnarray*}
\mathcal{S}: & \left(\vec{g};k\right) & \mapsto\left(\mathds{1};h_{-}^{-1}\right)\cdot\left(\vec{g};k\right)\cdot\left(\left(k\vec{u}k^{-1},\vec{u},\right);\mathds{1}\right)\cdot\left(h;h_{+}\right).
\end{eqnarray*}
Where $\cdot$ stands for the group multiplication and $;$ separates
the $Spin\left(4\right)$ elements from $SU\left(2\right)$. This
means that the fields of the model are invariant under $\mathcal{S}$,
\[
\phi\circ\mathcal{S}=\phi.
\]
Since the fields are transformed under $C$ as $\phi\mapsto\phi\circ C^{-1}$
we again get the following relations for the transformation $C$ 
\[
\phi\circ C\circ S=\phi\circ C,
\]
or equivalently for each $h\in Spin\left(4\right)$, $u\in SU\left(2\right)^{\times4}$
and $g_{i}\in Spin\left(4\right)$ there exist $\tilde{u}\in SU\left(2\right)^{\times4}$
and $\tilde{h}\in Spin\left(4\right)$ and $\tilde{k}\in SU\left(2\right)$
such that 
\begin{eqnarray}
C_{i}\left(g\cdot u_{k}\cdot h\right) & = & C_{i}\left(g\right)\cdot\tilde{u}_{C_{k}}\cdot\tilde{h}\label{eq:1}\\
C_{k}\left(h_{-}^{-1}kh_{+}\right) & = & \tilde{h}_{-}^{-1}C_{k}\left(k\right)\tilde{h}_{+},\label{eq:2}
\end{eqnarray}
where we write $u_{k}=\left(kuk^{-1},u\right)$. It is again obvious
that the left multiplication by $Spin\left(4\right)$ is untouched
by this transformation, however this is not true for normal component
$C_{k}$. We first focus on the transformations $C_{i}$ and treat
the normal component $C_{k}$ afterwards.

From the from of $u_{k}$ we notice that for $u=\mathds{1}$ the left
hand side does not depend on $k$ and so should't the right hand side.
It follows that for $u=\mathds{1}$ we have $\tilde{u}=\mathds{1}$.
Equation \ref{eq:1} then reads for the $\text{Diff}_{\mathds{1}}$
part, 
\[
c_{i}\left(g\cdot h\right)=c_{i}\left(g\right)\cdot\tilde{h}.
\]
It follows that $c_{i}$ is a homomorphism on $Spin\left(4\right)$
and therefore is either conjugation by a fixed element of $Spin\left(4\right)$
or a flip of the $SU\left(2\right)$ parts, which is a discrete transformation.
Hence, if $c_{i}$ is continuous it can be written as 
\[
c_{i}\left(g\right)=s\cdot g\cdot s^{-1},
\]
where $g,s\in Spin\left(4\right)$. This implies 
\[
\tilde{h}=s\cdot h\cdot s^{-1}.
\]
Inserting this relation now in equation \ref{eq:2} we obtain 
\[
C_{k}\left(h_{-}^{-1}kh_{+}\right)=\left(s_{-}h_{-}^{-1}s_{-}^{-1}\right)\,C_{k}\left(k\right)\,\left(s_{+}h_{+}s_{+}^{-1}\right).
\]
Splitting $C_{k}$ in the left multiplication by $SU\left(2\right)$
and $\text{Diff}_{\mathds{1}}$ we get for some fixed $w\in SU\left(2\right)$
\begin{equation}
w\,c_{k}\left(h_{-}^{-1}kh_{+}\right)=\left(s_{-}h_{-}^{-1}s_{-}^{-1}\right)w\,c_{k}\left(k\right)\,\left(s_{+}h_{+}s_{+}^{-1}\right).\label{eq:3}
\end{equation}
Choosing $h_{-}=h_{+}$ and setting $k=\mathds{1}$ we get 
\[
w=\left(s_{-}h_{-}^{-1}s_{-}^{-1}\right)w\,\left(s_{+}h_{-}s_{+}^{-1}\right),
\]
which can only be satisfied if $w=\mathds{1}$.

Inserting equation \ref{eq:3} in \ref{eq:1} and using the fact that
$c_{i}$ is a homomorphism yields 
\begin{eqnarray*}
c_{i}\left(u_{k}\right) & = & c_{i}\left(k,\mathds{1}\right)\cdot c_{i}\left(u,u\right)\cdot c_{i}\left(k^{-1},\mathds{1}\right)\\
 & = & c_{k}\left(k\right)c_{i}\left(u,u\right)c_{k}\left(k^{-1}\right).
\end{eqnarray*}
Hence, $c_{i}\left(a,b\right)=\left(c_{k}\left(a\right),c_{k}\left(b\right)\right)$
and $c_{k}$ is a homomorphisms itself. Therefor 
\[
c_{i}\left(g\right)=\left(s,s\right)\cdot g\cdot\left(s,s\right)^{-1},
\]
and $c_{k}\left(k\right)=sks^{-1}$.

These are the only admissible transformations that preserve the combinatorial
structure of the theory and respect the simplicity constraints together
with gauge invariance. Notice that $\mathcal{S}$ itself would fail
the requirement \ref{eq:condition on the group action} and therefor
can not be seen as a base manifold transformations, which is why we
dot not obtain the symmetry under $\mathcal{S}$ in this approach.

\section{Constancy of the phase\label{sec:Constancy-of-the}}

In this section we are going to prove the following statement 
\begin{thm*}
If for any point $g_{1},\cdots,g_{6}\in G$ where $G$ is a simple
Lie group the following equation holds 
\[
\sum_{i}^{4}\theta^{i}\left(\vec{g}_{i}\right)=0.
\]
And for any $h\in G_{D}\left(2\right)$ the functions $\theta^{i}$
satisfy 
\begin{equation}
\theta^{i}\circ R_{h}=\theta^{i},\label{eq:gi}
\end{equation}
then the functions $\theta^{i}$ are constants that add up to zero,
$\theta^{1}+\theta^{2}+\theta^{3}+\theta^{4}=0$ . 
\end{thm*}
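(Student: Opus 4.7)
My strategy proceeds in three steps: (i) I will force each $\theta^i$ into an additive decomposition over its three arguments using only the equation $\sum_i \theta^i(\vec g_i)=0$; (ii) I will then use gauge invariance to show that each additive piece is a continuous group homomorphism $G \to \mathbb{C}$; (iii) simplicity of $G$ finally forces these homomorphisms to be trivial, so each $\theta^i$ is a constant.

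For (i), the combinatorics of the four triples realizes the edge-to-face incidence of a tetrahedron: each of the six variables $g_1,\ldots,g_6$ appears in exactly two of the triples $\vec g_i$. Holding all variables except $g_1$ fixed and subtracting the equation at $g_1=e$, the two summands not containing $g_1$ cancel, leaving
\begin{equation*}
\theta^1(g_1,g_2,g_3) - \theta^1(e,g_2,g_3) = -\bigl[\theta^4(g_6,g_4,g_1) - \theta^4(g_6,g_4,e)\bigr].
\end{equation*}
The left-hand side depends on $(g_1,g_2,g_3)$ and the right-hand side on $(g_1,g_4,g_6)$, so both must in fact depend only on $g_1$. This forces $\theta^1(g_1,g_2,g_3) = \alpha_1(g_1) + \tilde\theta^1(g_2,g_3)$. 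Iterating the argument for $g_2$ and then $g_3$, and normalizing each auxiliary function to vanish at the identity, I obtain $\theta^1 = \alpha_1(g_1)+\beta_2(g_2)+\gamma_3(g_3) + C^1$, with analogous decompositions for $\theta^2,\theta^3,\theta^4$. Matching the six single-variable pieces pairwise via the original identity then yields $\sum_i C^i=0$ together with sign-cancellation relations between the single-variable parts.

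For (ii), gauge invariance $\theta^1(g_1h,g_2h,g_3h) = \theta^1(g_1,g_2,g_3)$ combined with the decomposition gives $\alpha_1(g_1h)+\beta_2(g_2h)+\gamma_3(g_3h) = \alpha_1(g_1)+\beta_2(g_2)+\gamma_3(g_3)$ for all $h\in G$. Setting $g_1=g_2=g_3=e$ yields $\alpha_1(h)+\beta_2(h)+\gamma_3(h)=0$; then setting $g_2=g_3=e$ with arbitrary $g_1$ produces the Cauchy relation $\alpha_1(g_1h)=\alpha_1(g_1)+\alpha_1(h)$, so $\alpha_1$ is a continuous homomorphism into the target additive group. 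The same argument applies to $\beta_2,\gamma_3$ and to all single-variable pieces of the other $\theta^i$. For step (iii), since $G$ is a simple connected Lie group its commutator subgroup equals $G$ itself, so any continuous homomorphism $G\to\mathbb{C}$ vanishes identically. Hence each single-variable piece is zero, each $\theta^i$ equals its constant $C^i$, and $\sum_i C^i=0$.

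The main subtlety I anticipate is in step (i): one must carefully track which variables enter which triples and verify that the cancellation of the ``cross'' terms really does force the additive decomposition into functions of one variable each. Once this is established, steps (ii) and (iii) reduce to standard facts about the Cauchy functional equation and the perfectness of simple Lie groups, and the conclusion is immediate.
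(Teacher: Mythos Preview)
Your proposal is correct and follows essentially the same three-step strategy as the paper: separation into single-variable pieces via the tetrahedral pairing, reduction to a homomorphism condition via gauge invariance, and then killing the homomorphism using simplicity of $G$. The only cosmetic differences are that the paper extracts the additive decomposition by differentiating along a curve $c(t)$ through $g_1$ and integrating back (rather than your finite-difference subtraction at $g_1=e$), and that it routes the homomorphism argument through an auxiliary lemma showing $f(h):=\theta(gh)-\theta(g)$ is a homomorphism, whereas you obtain the Cauchy equation for $\alpha_1$ directly; your route is marginally cleaner but not conceptually different.
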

We first prove the following lemma 
\begin{lem}
Let $\theta$ be a function from a Lie group $G$ to $\mathbb{R}$
such that for all $g\in G$ the difference 
\[
\theta\left(gh\right)-\theta\left(g\right)=f\left(h\right)
\]
is a function only on the ``distance'' of the points $h$. Then
$f$ is a homomorphism from the group $G$ to $\left(\mathbb{R},+\right)$. 
\end{lem}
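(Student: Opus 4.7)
The plan is to exploit the fact that, by hypothesis, the right-hand side $f(h)$ of the defining relation
\[
\theta(gh)-\theta(g)=f(h)
\]
is independent of the base point $g$. This means we may compare the two natural ways of expanding $\theta(g\,h_1h_2)-\theta(g)$ for arbitrary $h_1,h_2\in G$: either as a single step along $h_1h_2$, or as two consecutive steps along $h_1$ and then $h_2$.

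Concretely, first I would apply the defining relation directly with $h=h_1h_2$ to get
\[
\theta(g\,h_1h_2)-\theta(g)=f(h_1h_2).
\]
Then I would insert the intermediate point $g\,h_1$ and split the same difference as
\[
\theta(g\,h_1h_2)-\theta(g)=\bigl[\theta((g\,h_1)\,h_2)-\theta(g\,h_1)\bigr]+\bigl[\theta(g\,h_1)-\theta(g)\bigr].
\]
Applying the hypothesis to each bracket (using base points $g\,h_1$ and $g$ respectively, and noting crucially that $f$ does not see these base points) yields $f(h_2)+f(h_1)$. Equating the two expressions gives the homomorphism property $f(h_1h_2)=f(h_1)+f(h_2)$. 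The neutral element case $f(e)=0$ falls out by setting $h=e$ in the defining relation.

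The argument needs no calculation beyond this and no assumption on the structure of $G$ beyond its being a group; the Lie group and smoothness hypotheses would become relevant only if one wanted to go further and classify such $f$ as continuous characters $G\to(\mathbb{R},+)$. There is, strictly speaking, no real obstacle: the only subtle point is that one must read the hypothesis as asserting independence of $g$ \emph{uniformly} (so that it can be reused with the base point $gh_1$), which is precisely the content of the phrase ``a function only on the distance of the points $h$''. Once that is made explicit, the result follows in a single line.
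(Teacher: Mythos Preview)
Your argument is correct and is essentially the same telescoping trick the paper uses: both proofs evaluate $\theta(gh_1h_2)-\theta(g)$ in two ways, once directly and once via the intermediate point $gh_1$, to obtain $f(h_1h_2)=f(h_1)+f(h_2)$. The only cosmetic difference is that the paper also records $f(h^{-1})=-f(h)$ as a separate observation, which is of course already a consequence of the homomorphism property together with $f(e)=0$.
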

\begin{proof}
From the definition it follows that $f\left(\mathds{1}\right)=0$.
Choosing $g=\tilde{g}h^{-1}$ we get 
\[
f\left(h\right)=\theta\left(gh\right)-\theta\left(g\right)=\theta\left(\tilde{g}\right)-\theta\left(\tilde{g}h^{-1}\right)=-f\left(h^{-1}\right).
\]
Choosing $g=gh\tilde{h}$ we also get 
\begin{eqnarray*}
f\left(h\tilde{h}\right) & = & \theta\left(gh\tilde{h}\right)-\theta\left(g\right)\\
 & = & \theta\left(gh\tilde{h}\right)\pm\theta\left(gh\right)-\theta\left(g\right)\\
 & = & f\left(\tilde{h}\right)+f\left(h\right).
\end{eqnarray*}
Which concludes the proof. 
\end{proof}
We now prove the above theorem. 
\begin{proof}
The above equation then reads 
\begin{equation}
\theta^{1}\left(\vec{g}_{1}\right)+\theta^{2}\left(\vec{g}_{2}\right)+\theta^{3}\left(\vec{g}_{3}\right)+\theta^{4}\left(\vec{g}_{4}\right)=0,\label{eq:sum of phases}
\end{equation}
where $\vec{g}_{1}=\left(g_{1},g_{2},g_{3}\right)$, $\vec{g}_{2}=\left(g_{3,}g_{4},g_{5}\right)$,
$\vec{g}_{3}=\left(g_{5},g_{2},g_{6}\right)$ and $\vec{g}_{4}=\left(g_{6},g_{4},g_{1}\right)$.
Than for any differentiable curve $c:\mathbb{R}\supset I\to SU\left(2\right)$
with $c\left(0\right)=\mathds{1}$ the above equation is true if we
replace $g_{1}$ by the curve $c\left(t\right)$. Deriving the resulting
equation with respect to the parameter $t$ we get 
\[
\partial_{t}\theta^{1}\left(c\left(t\right),g_{2},g_{3}\right)+\partial_{t}\theta^{4}\left(g_{6},g_{4},c\left(t\right)\right)=0.
\]
By integration we obtain 
\[
\theta^{1}\left(c\left(t\right),g_{2},g_{3}\right)-\theta^{1}\left(\mathds{1},g_{2},g_{3}\right)=\theta_{1}^{1}\left(c\left(t\right)\right),
\]
for some function $\theta_{1}^{1}\left(c\left(t\right)\right)$. Applying
the same argument to $\theta^{4}$ we gain the following relations,
\begin{eqnarray*}
\theta^{1}\left(g_{1},g_{2},g_{3}\right) & = & \theta_{1}^{1}\left(g_{1}\right)+\theta^{1}\left(\mathds{1},g_{2},g_{3}\right)\\
\theta^{4}\left(g_{6},g_{4},g_{1}\right) & = & -\theta_{1}^{1}\left(gd_{1}\right)+\theta^{4}\left(\mathds{1},g_{2},g_{3}\right).
\end{eqnarray*}
Inserting these relations into equation \ref{eq:sum of phases} yields
\[
\theta^{1}\left(\mathds{1},g_{2},g_{3}\right)+\theta^{2}\left(\vec{g}_{2}\right)+\theta^{3}\left(\vec{g}_{3}\right)+\theta^{4}\left(g_{6},g_{4},\mathds{1}\right)=0.
\]
Successively performing the same step for all other group elements
$g_{i}$ eventually leads to the separation of the functions $\theta^{i}$
as follows, 
\begin{equation}
\theta^{i}\left(g_{1},g_{2},g_{3}\right)=\theta_{1}^{i}\left(g_{1}\right)+\theta_{2}^{i}\left(g_{2}\right)+\theta_{3}^{i}\left(g_{3}\right)+\text{const}.^{i},\label{eq:expression ofr theta 1}
\end{equation}
where $\theta_{j}^{i}$'s satisfy 
\begin{align*}
\theta_{1}^{1} & =-\theta_{3}^{4} & \theta_{2}^{1} & =-\theta_{2}^{3} & \theta_{3}^{1} & =-\theta_{1}^{2}\\
 &  & \theta_{2}^{2} & =-\theta_{2}^{4} & \theta_{3}^{2} & =-\theta_{1}^{3}\\
 &  &  &  & \theta_{3}^{3} & =-\theta_{1}^{4}.
\end{align*}
Using the requirement on gauge invariance (equation \ref{eq:gi})
yields for any $h\in G_{3D}$ 
\[
\theta_{1}^{i}\left(g_{1}h\right)+\theta_{2}^{i}\left(g_{2}h\right)+\theta_{3}^{i}\left(g_{3}h\right)=\theta^{i}\left(\vec{g}\right).
\]
Since this equation needs to hold for any $\vec{g}\in G^{\times3}$
we get for each $\theta_{j}^{i}$ the following condition 
\begin{equation}
\theta_{j}^{i}\left(gh\right)-\theta_{j}^{i}\left(g\right)=f_{j}^{i}\left(h\right),\label{eq:homo}
\end{equation}
with some functions $f_{j}^{i}$ . From the above lemma it follows
that $f_{j}^{i}$ is a homomorphism from $G$ to $\left(\mathbb{R},+\right)$.
Since $\left(\mathbb{R},+\right)$ is abelian and $G$ is simple $f$
is a constant zero function, $f=0.$

Evaluating equation \ref{eq:homo} at $g=\mathds{1}$ proves that
\begin{align*}
\theta_{j}^{i} & =\text{const}^{i},
\end{align*}
which together with equation \ref{eq:expression ofr theta 1} proves
\[
\theta^{i}\left(\vec{g},\phi^{c}\right)=\theta^{i},
\]
for some constants $\theta^{i}$. The conditions on the constants
follows. 
\end{proof}
\bibliographystyle{kp}
\bibliography{/Users/Alexander/Physik/Gravity/References/References}

\begingroup\raggedright\begin{thebibliography}{57}
\expandafter\ifx\csname natexlab\endcsname\relax\def\natexlab#1{#1}\fi

\bibitem[Oriti(2006)]{Oriti:2006se}
D.~Oriti, ``{The Group field theory approach to quantum gravity}'',
 \href{http://xxx.lanl.gov/abs/gr-qc/0607032}{{\ttfamily arXiv:gr-qc/0607032}}.
%%CITATION = GR-QC/0607032;%%.

\bibitem[Oriti(2009)]{Oriti:2009wn}
D.~Oriti, ``{The Group field theory approach to quantum gravity: Some recent
  results}'', {\em AIP Conf. Proc.} {\bfseries 1196} (2009) 209--218,
 \href{http://xxx.lanl.gov/abs/0912.2441}{{\ttfamily arXiv:0912.2441}}.
%%CITATION = ARXIV:0912.2441;%%.

\bibitem[Oriti(2011)]{Oriti:2011jm}
D.~Oriti, ``{The microscopic dynamics of quantum space as a group field
  theory}'', in ``{Proceedings, Foundations of Space and Time: Reflections on
  Quantum Gravity: Cape Town, South Africa}'', pp.~257--320.
\newblock 2011.
\newblock
 \href{http://xxx.lanl.gov/abs/1110.5606}{{\ttfamily arXiv:1110.5606}}.
\newblock
%%CITATION = ARXIV:1110.5606;%%.

\bibitem[Oriti(2016)]{Oriti:2013aqa}
D.~Oriti, ``{Group field theory as the 2nd quantization of Loop Quantum
  Gravity}'', {\em Class. Quant. Grav.} {\bfseries 33} (2016), no.~8, 085005,
 \href{http://xxx.lanl.gov/abs/1310.7786}{{\ttfamily arXiv:1310.7786}}.
%%CITATION = ARXIV:1310.7786;%%.

\bibitem[Krajewski(2011)]{Krajewski:2012aw}
T.~Krajewski, ``{Group field theories}'', {\em PoS} {\bfseries QGQGS2011}
  (2011) 005,
 \href{http://xxx.lanl.gov/abs/1210.6257}{{\ttfamily arXiv:1210.6257}}.
%%CITATION = ARXIV:1210.6257;%%.

\bibitem[Oriti(2014)]{Oriti:2014uga}
D.~Oriti, ``{Group Field Theory and Loop Quantum Gravity}'',
\newblock 2014.
\newblock
 \href{http://xxx.lanl.gov/abs/1408.7112}{{\ttfamily arXiv:1408.7112}}.
\newblock
%%CITATION = ARXIV:1408.7112;%%.

\bibitem[Gross(1992)]{Gross:1991hx}
M.~Gross, ``{Tensor models and simplicial quantum gravity in > 2-D}'', {\em
  Nucl. Phys. Proc. Suppl.} {\bfseries 25A} (1992)
144--149.
%%CITATION = NUPHZ,25A,144;%%.

\bibitem[Gurau and Ryan(2012)]{Gurau:2011xp}
R.~Gurau and J.~P. Ryan, ``{Colored Tensor Models - a review}'', {\em SIGMA}
  {\bfseries 8} (2012) 020,
 \href{http://xxx.lanl.gov/abs/1109.4812}{{\ttfamily arXiv:1109.4812}}.
%%CITATION = ARXIV:1109.4812;%%.

\bibitem[Rivasseau(2011)]{Rivasseau:2011hm}
V.~Rivasseau, ``{Quantum Gravity and Renormalization: The Tensor Track}'', {\em
  AIP Conf. Proc.} {\bfseries 1444} (2011) 18--29,
 \href{http://xxx.lanl.gov/abs/1112.5104}{{\ttfamily arXiv:1112.5104}}.
%%CITATION = ARXIV:1112.5104;%%.

\bibitem[Carrozza(2013)]{Carrozza:2013mna}
S.~Carrozza, ``{Tensorial methods and renormalization in Group Field
  Theories}'', PhD thesis, Orsay, LPT, 2013.
\newblock
 \href{http://xxx.lanl.gov/abs/1310.3736}{{\ttfamily arXiv:1310.3736}}.
\newblock
%%CITATION = ARXIV:1310.3736;%%.

\bibitem[Kegeles and Oriti(2016)]{Kegeles:2015oua}
A.~Kegeles and D.~Oriti, ``{Generalized conservation laws in non-local field
  theories}'', {\em J. Phys.} {\bfseries A49} (2016), no.~13, 135401,
 \href{http://xxx.lanl.gov/abs/1506.03320}{{\ttfamily arXiv:1506.03320}}.
%%CITATION = ARXIV:1506.03320;%%.

\bibitem[Zawistowski(2001)]{Zawistowski:2001bt}
Z.~J. Zawistowski, ``{Symmetries of Integro-Differential Equations}'', {\em
  eConf} {\bfseries C0107094} (2001)
263--270.
%%CITATION = ECONF,C0107094,263;%%.

\bibitem[Ibragimov et~al.(2002)Ibragimov, Kovalev, and
  Pustovalov]{ibragimov2002symmetries}
N.~H. Ibragimov, V.~Kovalev, and V.~Pustovalov, ``Symmetries of
  integro-differential equations: A survey of methods illustrated by the benny
  equations'', {\em Nonlinear Dynamics} {\bfseries 28} (2002), no.~2, 135--153.

\bibitem[Meleshko et~al.(2010)Meleshko, Grigoriev, Ibragimov, and
  Kovalev]{meleshko2010symmetries}
S.~V. Meleshko, Y.~N. Grigoriev, N.~K. Ibragimov, and V.~F. Kovalev,
  ``Symmetries of integro-differential equations: with applications in
  mechanics and plasma physics'', Springer Science \& Business Media, 2010.

\bibitem[Ben~Geloun(2012)]{BenGeloun:2011cz}
J.~Ben~Geloun, ``{Classical Group Field Theory}'', {\em J. Math. Phys.}
  {\bfseries 53} (2012) 022901,
 \href{http://xxx.lanl.gov/abs/1107.3122}{{\ttfamily arXiv:1107.3122}}.
%%CITATION = ARXIV:1107.3122;%%.

\bibitem[Baratin and Oriti(2010)]{Baratin:2010wi}
A.~Baratin and D.~Oriti, ``{Group field theory with non-commutative metric
  variables}'', {\em Phys. Rev. Lett.} {\bfseries 105} (2010) 221302,
 \href{http://xxx.lanl.gov/abs/1002.4723}{{\ttfamily arXiv:1002.4723}}.
%%CITATION = ARXIV:1002.4723;%%.

\bibitem[Baratin and Oriti(2012)]{Baratin:2011hp}
A.~Baratin and D.~Oriti, ``{Group field theory and simplicial gravity path
  integrals: A model for Holst-Plebanski gravity}'', {\em Phys. Rev.}
  {\bfseries D85} (2012) 044003,
 \href{http://xxx.lanl.gov/abs/1111.5842}{{\ttfamily arXiv:1111.5842}}.
%%CITATION = ARXIV:1111.5842;%%.

\bibitem[Baratin and Oriti(2011)]{Baratin:2011tx}
A.~Baratin and D.~Oriti, ``{Quantum simplicial geometry in the group field
  theory formalism: reconsidering the Barrett-Crane model}'', {\em New J.
  Phys.} {\bfseries 13} (2011) 125011,
 \href{http://xxx.lanl.gov/abs/1108.1178}{{\ttfamily arXiv:1108.1178}}.
%%CITATION = ARXIV:1108.1178;%%.

\bibitem[Baez(1998)]{Baez:1997zt}
J.~C. Baez, ``{Spin foam models}'', {\em Class. Quant. Grav.} {\bfseries 15}
  (1998) 1827--1858,
 \href{http://xxx.lanl.gov/abs/gr-qc/9709052}{{\ttfamily arXiv:gr-qc/9709052}}.
%%CITATION = GR-QC/9709052;%%.

\bibitem[Alexandrov and Roche(2011)]{Alexandrov:2010un}
S.~Alexandrov and P.~Roche, ``{Critical Overview of Loops and Foams}'', {\em
  Phys. Rept.} {\bfseries 506} (2011) 41--86,
 \href{http://xxx.lanl.gov/abs/1009.4475}{{\ttfamily arXiv:1009.4475}}.
%%CITATION = ARXIV:1009.4475;%%.

\bibitem[Ben~Geloun et~al.(2010)Ben~Geloun, Gurau, and
  Rivasseau]{Geloun:2010vj}
J.~Ben~Geloun, R.~Gurau, and V.~Rivasseau, ``{EPRL/FK Group Field Theory}'',
  {\em Europhys. Lett.} {\bfseries 92} (2010) 60008,
 \href{http://xxx.lanl.gov/abs/1008.0354}{{\ttfamily arXiv:1008.0354}}.
%%CITATION = ARXIV:1008.0354;%%.

\bibitem[Alexandrov et~al.(2012)Alexandrov, Geiller, and
  Noui]{Alexandrov:2011ab}
S.~Alexandrov, M.~Geiller, and K.~Noui, ``{Spin Foams and Canonical
  Quantization}'', {\em SIGMA} {\bfseries 8} (2012) 055,
 \href{http://xxx.lanl.gov/abs/1112.1961}{{\ttfamily arXiv:1112.1961}}.
%%CITATION = ARXIV:1112.1961;%%.

\bibitem[Perez(2013)]{Perez:2012wv}
A.~Perez, ``{The Spin Foam Approach to Quantum Gravity}'', {\em Living Rev.
  Rel.} {\bfseries 16} (2013) 3,
 \href{http://xxx.lanl.gov/abs/1205.2019}{{\ttfamily arXiv:1205.2019}}.
%%CITATION = ARXIV:1205.2019;%%.

\bibitem[Rovelli(2008)]{Rovelli2008}
C.~Rovelli, ``Loop quantum gravity'', {\em Living Reviews in Relativity}
  {\bfseries 11} (2008), no.~1, 5.

\bibitem[Chiou(2014)]{Chiou:2014jwa}
D.-W. Chiou, ``{Loop Quantum Gravity}'', {\em Int. J. Mod. Phys.} {\bfseries
  D24} (2014), no.~01, 1530005,
 \href{http://xxx.lanl.gov/abs/1412.4362}{{\ttfamily arXiv:1412.4362}}.
%%CITATION = ARXIV:1412.4362;%%.

\bibitem[Gielen et~al.(2014)Gielen, Oriti, and Sindoni]{Gielen:2013naa}
S.~Gielen, D.~Oriti, and L.~Sindoni, ``{Homogeneous cosmologies as group field
  theory condensates}'', {\em JHEP} {\bfseries 06} (2014) 013,
 \href{http://xxx.lanl.gov/abs/1311.1238}{{\ttfamily arXiv:1311.1238}}.
%%CITATION = ARXIV:1311.1238;%%.

\bibitem[Gielen and Oriti(2014)]{Gielen:2014uga}
S.~Gielen and D.~Oriti, ``{Quantum cosmology from quantum gravity condensates:
  cosmological variables and lattice-refined dynamics}'', {\em New J. Phys.}
  {\bfseries 16} (2014), no.~12, 123004,
 \href{http://xxx.lanl.gov/abs/1407.8167}{{\ttfamily arXiv:1407.8167}}.
%%CITATION = ARXIV:1407.8167;%%.

\bibitem[Gielen(2014)]{Gielen:2014ila}
S.~Gielen, ``{Quantum cosmology of (loop) quantum gravity condensates: An
  example}'', {\em Class. Quant. Grav.} {\bfseries 31} (2014) 155009,
 \href{http://xxx.lanl.gov/abs/1404.2944}{{\ttfamily arXiv:1404.2944}}.
%%CITATION = ARXIV:1404.2944;%%.

\bibitem[Oriti et~al.(2016)Oriti, Sindoni, and Wilson-Ewing]{Oriti:2016qtz}
D.~Oriti, L.~Sindoni, and E.~Wilson-Ewing, ``{Emergent Friedmann dynamics with
  a quantum bounce from quantum gravity condensates}'', {\em Class. Quant.
  Grav.} {\bfseries 33} (2016), no.~22, 224001,
 \href{http://xxx.lanl.gov/abs/1602.05881}{{\ttfamily arXiv:1602.05881}}.
%%CITATION = ARXIV:1602.05881;%%.

\bibitem[Ben~Geloun and Bonzom(2011)]{Geloun:2011cy}
J.~Ben~Geloun and V.~Bonzom, ``{Radiative corrections in the Boulatov-Ooguri
  tensor model: The 2-point function}'', {\em Int. J. Theor. Phys.} {\bfseries
  50} (2011) 2819--2841,
 \href{http://xxx.lanl.gov/abs/1101.4294}{{\ttfamily arXiv:1101.4294}}.
%%CITATION = ARXIV:1101.4294;%%.

\bibitem[Ben~Geloun(2013)]{Geloun:2013zka}
J.~Ben~Geloun, ``{On the finite amplitudes for open graphs in Abelian dynamical
  colored Boulatov--Ooguri models}'', {\em J. Phys.} {\bfseries A46} (2013)
  402002,
 \href{http://xxx.lanl.gov/abs/1307.8299}{{\ttfamily arXiv:1307.8299}}.
%%CITATION = ARXIV:1307.8299;%%.

\bibitem[Gurau(2011)]{Gurau:2009tw}
R.~Gurau, ``{Colored Group Field Theory}'', {\em Commun. Math. Phys.}
  {\bfseries 304} (2011) 69--93,
 \href{http://xxx.lanl.gov/abs/0907.2582}{{\ttfamily arXiv:0907.2582}}.
%%CITATION = ARXIV:0907.2582;%%.

\bibitem[Gurau(2010)]{Gurau:2010nd}
R.~Gurau, ``{Lost in Translation: Topological Singularities in Group Field
  Theory}'', {\em Class. Quant. Grav.} {\bfseries 27} (2010) 235023,
 \href{http://xxx.lanl.gov/abs/1006.0714}{{\ttfamily arXiv:1006.0714}}.
%%CITATION = ARXIV:1006.0714;%%.

\bibitem[Boulatov(1992)]{Boulatov:1992vp}
D.~V. Boulatov, ``{A Model of three-dimensional lattice gravity}'', {\em Mod.
  Phys. Lett.} {\bfseries A7} (1992) 1629--1646,
 \href{http://xxx.lanl.gov/abs/hep-th/9202074}{{\ttfamily
  arXiv:hep-th/9202074}}.
%%CITATION = HEP-TH/9202074;%%.

\bibitem[Bonzom et~al.(2012)Bonzom, Gurau, and Rivasseau]{Bonzom:2012hw}
V.~Bonzom, R.~Gurau, and V.~Rivasseau, ``{Random tensor models in the large N
  limit: Uncoloring the colored tensor models}'', {\em Phys. Rev.} {\bfseries
  D85} (2012) 084037,
 \href{http://xxx.lanl.gov/abs/1202.3637}{{\ttfamily arXiv:1202.3637}}.
%%CITATION = ARXIV:1202.3637;%%.

\bibitem[Benedetti et~al.(2015)Benedetti, Ben~Geloun, and
  Oriti]{Benedetti:2014qsa}
D.~Benedetti, J.~Ben~Geloun, and D.~Oriti, ``{Functional Renormalisation Group
  Approach for Tensorial Group Field Theory: a Rank-3 Model}'', {\em JHEP}
  {\bfseries 03} (2015) 084,
 \href{http://xxx.lanl.gov/abs/1411.3180}{{\ttfamily arXiv:1411.3180}}.
%%CITATION = ARXIV:1411.3180;%%.

\bibitem[Freidel et~al.(1999)Freidel, Krasnov, and Puzio]{Freidel:1999rr}
L.~Freidel, K.~Krasnov, and R.~Puzio, ``{BF description of higher dimensional
  gravity theories}'', {\em Adv. Theor. Math. Phys.} {\bfseries 3} (1999)
  1289--1324,
 \href{http://xxx.lanl.gov/abs/hep-th/9901069}{{\ttfamily
  arXiv:hep-th/9901069}}.
%%CITATION = HEP-TH/9901069;%%.

\bibitem[Barrett and Crane(1998)]{Barrett:1997gw}
J.~W. Barrett and L.~Crane, ``{Relativistic spin networks and quantum
  gravity}'', {\em J. Math. Phys.} {\bfseries 39} (1998) 3296--3302,
 \href{http://xxx.lanl.gov/abs/gr-qc/9709028}{{\ttfamily arXiv:gr-qc/9709028}}.
%%CITATION = GR-QC/9709028;%%.

\bibitem[Olver(2000)]{olver2000applications}
P.~J. Olver, ``Applications of lie groups to differential equations'', Springer
  Science \& Business Media, 2000.

\bibitem[Lie(1880)]{lie1880theorie}
S.~Lie, ``Theorie der transformationsgruppen i'', {\em Mathematische Annalen}
  {\bfseries 16} (1880), no.~4, 441--528.

\bibitem[{Noether}(1971)]{Noether1971aaa}
E.~{Noether}, ``{Invariant variation problems}'', {\em Transport Theory and
  Statistical Physics} {\bfseries 1} (1971) 186--207,
  \href{http://xxx.lanl.gov/abs/physics/0503066}{{\ttfamily physics/0503066}}.

\bibitem[Fushchich and Kornyak(1989)]{fushchich1989computer}
W.~Fushchich and V.~V. Kornyak, ``Computer algebra application for determining
  lie and lie-b{\"a}cklund symmetries of differential equations'', {\em Journal
  of Symbolic Computation} {\bfseries 7} (1989), no.~6, 611--619.

\bibitem[{Akhtarshenas}(2010)]{Akhtarshenas2010aaa}
S.~J. {Akhtarshenas}, ``{Differential geometry on SU(N): Left and right
  invariant vector fields and one-forms}'', {\em ArXiv e-prints}, 2010
  \href{http://xxx.lanl.gov/abs/1003.2708}{{\ttfamily arXiv:1003.2708}}.

\bibitem[Gielen et~al.(2013)Gielen, Oriti, and Sindoni]{Gielen:2013kla}
S.~Gielen, D.~Oriti, and L.~Sindoni, ``{Cosmology from Group Field Theory
  Formalism for Quantum Gravity}'', {\em Phys. Rev. Lett.} {\bfseries 111}
  (2013), no.~3, 031301,
 \href{http://xxx.lanl.gov/abs/1303.3576}{{\ttfamily arXiv:1303.3576}}.
%%CITATION = ARXIV:1303.3576;%%.

\bibitem[Gielen(2015{\natexlab{a}})]{Gielen:2014usa}
S.~Gielen, ``{Perturbing a quantum gravity condensate}'', {\em Phys. Rev.}
  {\bfseries D91} (2015){\natexlab{a}}, no.~4, 043526,
 \href{http://xxx.lanl.gov/abs/1411.1077}{{\ttfamily arXiv:1411.1077}}.
%%CITATION = ARXIV:1411.1077;%%.

\bibitem[Gielen(2015{\natexlab{b}})]{Gielen:2015kua}
S.~Gielen, ``{Identifying cosmological perturbations in group field theory
  condensates}'', {\em JHEP} {\bfseries 08} (2015){\natexlab{b}} 010,
 \href{http://xxx.lanl.gov/abs/1505.07479}{{\ttfamily arXiv:1505.07479}}.
%%CITATION = ARXIV:1505.07479;%%.

\bibitem[Oriti et~al.(2017)Oriti, Sindoni, and Wilson-Ewing]{Oriti:2016ueo}
D.~Oriti, L.~Sindoni, and E.~Wilson-Ewing, ``{Bouncing cosmologies from quantum
  gravity condensates}'', {\em Class. Quant. Grav.} {\bfseries 34} (2017) 04,
 \href{http://xxx.lanl.gov/abs/1602.08271}{{\ttfamily arXiv:1602.08271}}.
%%CITATION = ARXIV:1602.08271;%%.

\bibitem[de~Cesare et~al.(2016)de~Cesare, Pithis, and
  Sakellariadou]{deCesare:2016rsf}
M.~de~Cesare, A.~G.~A. Pithis, and M.~Sakellariadou, ``{Cosmological
  implications of interacting Group Field Theory models: cyclic Universe and
  accelerated expansion}'', {\em Phys. Rev.} {\bfseries D94} (2016), no.~6,
  064051,
 \href{http://xxx.lanl.gov/abs/1606.00352}{{\ttfamily arXiv:1606.00352}}.
%%CITATION = ARXIV:1606.00352;%%.

\bibitem[Dittrich(2007)]{Dittrich:2004cb}
B.~Dittrich, ``{Partial and complete observables for Hamiltonian constrained
  systems}'', {\em Gen. Rel. Grav.} {\bfseries 39} (2007) 1891--1927,
 \href{http://xxx.lanl.gov/abs/gr-qc/0411013}{{\ttfamily arXiv:gr-qc/0411013}}.
%%CITATION = GR-QC/0411013;%%.

\bibitem[Dittrich(2006)]{Dittrich:2005kc}
B.~Dittrich, ``{Partial and complete observables for canonical general
  relativity}'', {\em Class. Quant. Grav.} {\bfseries 23} (2006) 6155--6184,
 \href{http://xxx.lanl.gov/abs/gr-qc/0507106}{{\ttfamily arXiv:gr-qc/0507106}}.
%%CITATION = GR-QC/0507106;%%.

\bibitem[Baratin et~al.(2011)Baratin, Dittrich, Oriti, and
  Tambornino]{Baratin:2010nn}
A.~Baratin, B.~Dittrich, D.~Oriti, and J.~Tambornino, ``{Non-commutative flux
  representation for loop quantum gravity}'', {\em Class. Quant. Grav.}
  {\bfseries 28} (2011) 175011,
 \href{http://xxx.lanl.gov/abs/1004.3450}{{\ttfamily arXiv:1004.3450}}.
%%CITATION = ARXIV:1004.3450;%%.

\bibitem[Gielen and Sindoni(2016)]{Gielen:2016dss}
S.~Gielen and L.~Sindoni, ``{Quantum Cosmology from Group Field Theory
  Condensates: a Review}'', {\em SIGMA} {\bfseries 12} (2016) 082,
 \href{http://xxx.lanl.gov/abs/1602.08104}{{\ttfamily arXiv:1602.08104}}.
%%CITATION = ARXIV:1602.08104;%%.

\bibitem[Ben~Geloun et~al.(2015)Ben~Geloun, Martini, and Oriti]{Geloun:2015qfa}
J.~Ben~Geloun, R.~Martini, and D.~Oriti, ``{Functional Renormalization Group
  analysis of a Tensorial Group Field Theory on $\mathbb{R}^3$}'', {\em
  Europhys. Lett.} {\bfseries 112} (2015), no.~3, 31001,
 \href{http://xxx.lanl.gov/abs/1508.01855}{{\ttfamily arXiv:1508.01855}}.
%%CITATION = ARXIV:1508.01855;%%.

\bibitem[Ben~Geloun et~al.(2016)Ben~Geloun, Martini, and Oriti]{Geloun:2016qyb}
J.~Ben~Geloun, R.~Martini, and D.~Oriti, ``{Functional Renormalisation Group
  analysis of Tensorial Group Field Theories on $\mathbb{R}^d$}'', {\em Phys.
  Rev.} {\bfseries D94} (2016), no.~2, 024017,
 \href{http://xxx.lanl.gov/abs/1601.08211}{{\ttfamily arXiv:1601.08211}}.
%%CITATION = ARXIV:1601.08211;%%.

\bibitem[Ben~Geloun and Rivasseau(2013)]{BenGeloun:2011rc}
J.~Ben~Geloun and V.~Rivasseau, ``{A Renormalizable 4-Dimensional Tensor Field
  Theory}'', {\em Commun. Math. Phys.} {\bfseries 318} (2013) 69--109,
 \href{http://xxx.lanl.gov/abs/1111.4997}{{\ttfamily arXiv:1111.4997}}.
%%CITATION = ARXIV:1111.4997;%%.

\bibitem[Carrozza et~al.(2014)Carrozza, Oriti, and Rivasseau]{Carrozza:2012uv}
S.~Carrozza, D.~Oriti, and V.~Rivasseau, ``{Renormalization of Tensorial Group
  Field Theories: Abelian U(1) Models in Four Dimensions}'', {\em Commun. Math.
  Phys.} {\bfseries 327} (2014) 603--641,
 \href{http://xxx.lanl.gov/abs/1207.6734}{{\ttfamily arXiv:1207.6734}}.
%%CITATION = ARXIV:1207.6734;%%.

\bibitem[Ben~Geloun(2016)]{Geloun:2016bhh}
J.~Ben~Geloun, ``{Renormalizable Tensor Field Theories}'', in ``{18th
  International Congress on Mathematical Physics (ICMP2015) Santiago de Chile,
  Chile, July 27-August 1, 2015}''.
\newblock 2016.
\newblock
 \href{http://xxx.lanl.gov/abs/1601.08213}{{\ttfamily arXiv:1601.08213}}.
\newblock
%%CITATION = ARXIV:1601.08213;%%.

\end{thebibliography}\endgroup

\end{document}